\declaretheorem{lemma}
\declaretheorem{fact}
\theoremstyle{remark}
\newtheorem{remark}{\textbf{Remark}}
\theoremstyle{definition}
\newtheorem{condition}{Condition}
\renewcommand{\vec}[1]{\bm{#1}}
\DeclareMathOperator*{\E}{\mathbb{E}}
\def\NN{{\mathbb N}}
\def\RR{{\mathbb R}}
\def\mM{{\mathcal M}}
\def\mA{{\mathcal A}}
\newcommand{\inprod}[2]{\left\langle #1, #2 \right\rangle}
\newcommand{\maco}
{\texttt{MACO}\xspace}
\title{A Multi-Agent Conversational Bandit Approach to Online Evaluation and Selection of User-Aligned LLM Responses}
\author{
Xiangxiang Dai\textsuperscript{\rm 1},
Yuejin Xie\textsuperscript{\rm 2},
Maoli Liu\textsuperscript{\rm 1},
Xuchuang Wang\textsuperscript{\rm 3},\\
Zhuohua Li\textsuperscript{\rm 4, 1}\thanks{Zhuohua Li is the corresponding author.},
Huanyu Wang\textsuperscript{\rm 5},
John C.S. Lui\textsuperscript{\rm 1}
}
\begin{document}

\maketitle

\begin{abstract}
Prompt-based offline methods are commonly used to optimize large language model (LLM) responses, but evaluating these responses is computationally intensive and often fails to accommodate diverse response styles. This study introduces a novel online evaluation framework that employs a multi-agent conversational bandit model to select optimal responses while aligning with user preferences dynamically.
To tackle challenges such as high-dimensional features, large response sets, adaptive conversational needs, and multi-device access, we propose \maco, Multi-Agent Conversational Online Learning, which comprises two key components: (1) \texttt{MACO-A}: Executed by local agents, it employs an online elimination mechanism to filter out low-quality responses.
(2) \texttt{MACO-S}: Executed by the cloud server, it adaptively adjusts selection strategies based on aggregated preference data. An adaptive preference mechanism triggers asynchronous conversations to enhance alignment efficiency. Theoretical analysis demonstrates that \maco achieves near-optimal regret bounds, matching state-of-the-art performance in various degenerate cases. 
Extensive experiments utilizing Google and OpenAI text embedding models on the real-world datasets with different response styles, combined with Llama and GPT-4o, show that \maco consistently outperforms baseline methods by at least 8.29\% across varying response set sizes and numbers of agents.\footnote{Implementation Code: https://github.com/TarferSoul/MACO} 

\end{abstract}

\section{Introduction}
\label{sec:introduction}

Large Language Models (LLMs) have profoundly transformed the technological landscape of society~\cite{zhusoft2025,han2025hilora,zhang2024decomposing}. A critical research direction is how to optimize responses from LLMs by exploring prompts~\cite{liu2023pre}. Currently, the research community is actively refining evaluation methodologies, such as prompt engineering~\cite{guo2023connecting,pan2023plum,pryzant2023automatic}, to automatically enhance the accuracy of LLM responses and their alignment with user preferences. This process is collectively termed ``\textbf{LLM response identification.}''
In the offline phase, prompt-driven approaches generate diverse response styles and formats. However, the complexity and diversity of response styles ~\cite{zhang_harnessing_2024,li_stylechat_2024} pose significant challenges. Currently, LLM response evaluation relies on offline pointwise scoring, assigning independent scores to each response~\cite{liu2025inference}. This evaluation approach often requires generating numerous candidate responses and scoring them individually, which is time-consuming and computationally expensive due to the resource-intensive nature of LLM inference and evaluation metrics \cite{hu_online_2024, zhang2025mole}. For instance, evaluating 205 zero-shot prompts on 784 GSM8K questions using Mistral-7B requires 78 Nvidia A6000 GPU hours ~\cite{zhou_speeding_2025}. In contrast, online evaluation restricts the candidate response set, enabling efficient identification of optimal responses.

\begin{figure}[h]
   \centering
   \includegraphics[width=\linewidth]{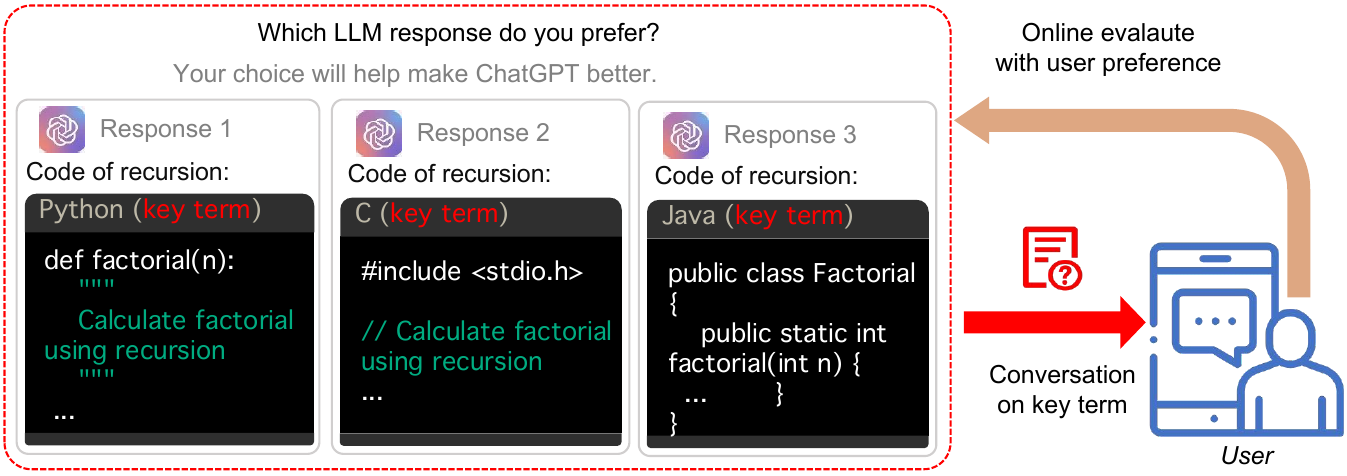}
   \caption{Evaluation aligned with online user feedback.}
   \label{fig:example}
\end{figure}
This study investigates the ``\textit{online}'' evaluation of a curated set of candidate responses to select the optimal one. An attractive feature of online evaluation is its capability to adapt to personalized user preferences, such as humorous or formal tones, while ensuring high response quality. Although prior work (e.g.,~\cite{rafailov2024direct}) has explored preference optimization, it is often limited to binary ``like/dislike" judgments. As preference demands grow increasingly diverse, LLM response identification must align with human preferences. To achieve alignment with human preferences, LLM response selection must prioritize both quality and user preference alignment. Traditional approaches rely on offline human quality control to build response pools with diverse styles, grouping responses by key terms (e.g., ``C/C++" or ``humorous tone")~\cite{Zhang-Conversational-WWW20,wu-2021-clustering-of-conversational,zhang2014addressing}, where these key terms can generalize to related responses for enabling preference inference. Our online frameworks take this further by allowing agents to directly query users about preferences on key terms, thereby advancing and improving the preference alignment process. For example, as shown in Fig. \ref{fig:example}, when ChatGPT presents different code styles of recursion, user online feedback on programming style can optimize subsequent responses, enhancing preference alignment accuracy. However, multi-device access through distributed agents (e.g., queries on the Poe AI platform~\cite{poe} across smartphones, tablets, and desktops) generates fragmented and heterogeneous preference data, complicating the alignment process.

To adaptively identify suitable responses from an initial offline response set and align with human preferences, we adopt an ``\textit{online contextual conversational bandit framework}'', a dynamic and interactive response selection mechanism. Specifically, the server treats each potential LLM response as an ``arm'' and employs multi-armed bandit (MAB) algorithms to actively evaluate and select the next ``arm'' for response identification. Using this contextual conversational bandit framework, the server can also query users about preferences for key terms (e.g., humorous or serious tone)~\cite{lei2020conversational,xie-2021-comparison-based}, accelerating preference learning and alignment. However, existing conversational bandit methods fall short in addressing the unique challenges of LLM response identification, which include:

\ding{182} Existing preference-aware bandit models, primarily used in recommendation systems~\cite{wu-2021-clustering-of-conversational,liu-2022-federated}, typically employ singular value decomposition (SVD)  to extract low-dimensional feature vectors. However, the complex semantic information in LLM responses results in high-dimensional feature spaces \cite{cheng2025tradingvectordatavector}, significantly increasing computational complexity.

 \ding{183} Most conversational bandits assume an infinite arm framework~\cite{liu2025leveraging,lei2020conversational}. In contrast, prompt engineering typically generates a large but finite set of responses as candidates. Current elimination-based contextual bandit algorithms can handle finite arm sets but rely on computationally intensive G-optimal designs~\cite{huang-2021-federated,lattimore-2020-bandit-algorithms,li2024fedconpe} to determine response selection distributions.

 \ding{184} Existing conversational bandit studies~\cite{Wang-2023-Efficient,Zhang-Conversational-WWW20,dai2024misspecification} use predefined functions to control conversational frequency, which typically follow a fixed sequence of engagements to initiate a specific number of conversations. But this fixed interaction model struggles to adapt to the dynamic needs of LLM response identification, failing to adaptively adjust the direction and frequency of key term conversations.
    
 \ding{185} As mentioned, the multi-device access inherent to LLMs significantly complicates conversational design. Existing conversational bandit methods are limited to single-agent scenarios and fail to address multi-device data access and processing, hindering efficient preference alignment. Conversely, existing multi-agent bandit studies multi-agent bandit studies~\cite{huang-2021-federated,lin-2023-federated,wang-2020-distributed-bandit} require local agents to upload user feedback or share identical arm sets, potentially leading to reduced flexibility and increased communication costs.

  This paper makes the following contributions:
  
$\bullet$ \textbf{Model Formulation}: We design a multi-agent framework for online evaluation and selection of LLM responses, enhancing offline evaluation methods by incorporating user preferences for alignment, and implementing a centralized mechanism to achieve decision consensus among agents.

$\bullet$ \textbf{Algorithm Design}: We propose \maco, multi-agent conversational online learning, comprising \texttt{MACO-A} (executed by local agents) and \texttt{MACO-S} (executed by cloud server). \texttt{MACO-A} employs an online elimination mechanism to quickly filter out low-quality responses, while \texttt{MACO-S} dynamically adjusts selection strategies based on aggregated preference data. Our adaptive preference mechanism triggers conversations only when necessary, thus aligning with human preferences more efficiently.
    
$\bullet$ \textbf{Theoretical Analysis}: We rigorously prove regret upper and lower bounds for our proposed \maco, demonstrating near-optimal performance. The innovative conversational mechanism maintains performance while controlling communication costs at $\mathcal{O}(d^2M\log T)$, eliminating the need for computationally intensive G-optimal designs used in traditional elimination-based bandits. 
    
$\bullet$ \textbf{Empirical Validation}: Extensive experiments are conducted using two text embedding models from Google and OpenAI, on open-source LLM Llama and real-world datasets. Results demonstrate that the \maco significantly outperforms baselines across varying response set sizes and agent counts. Moreover, \maco reduces time overhead substantially, while preserving performance.

\section{System Model}
\label{sec:models}
This section formulates the multi-agent conversational bandit for online LLM response evaluation and selection. 
 For any  vector $\vec{x}$ and a positive semi-definite matrix \(\vec{M}\), let \(\|\vec{x}\|_{\vec{M}}\coloneqq\sqrt{\vec{x}^\mathsf{T} \vec{M} \vec{x}}\). Denote the cardinality of a set \(\mathcal{A}\) as \(|\mathcal{A}|\) and \([z] \coloneqq \{1, \dots, z\}\) for \(\forall z \in \NN^{+}\).

\subsection{Online LLM Response Identification}
We define the set of local agents as \(\mM\) with $|\mM|=M$, which represent devices such as smartphones, laptops, and tablets. For any local agent \(m \in \mM\), the finite arm set of LLM responses is denoted as \(\mA_m\), which represents possible responses generated from various prompts. Given the heterogeneity of agents, different local agents may have different arm sets. 
As mentioned in Section \ref{sec:introduction}, traditional offline techniques (e.g., prompt engineering) can help to construct a set of initial responses, but due to the diversity of LLM outputs and user preferences, it is essential to online evaluate LLM responses and select the optimal one, despite having an offline initiatory set of LLM responses.
Our model adopts a time-slotted approach, denoted by discrete-time rounds \(\mathcal{T} = \{1, 2, 3, \ldots, T\}\), where each local agent selects one arm, i.e., LLM response, at each round \(t \in \mathcal{T}\).

\subsection{Multi-Agent User-Personalized Bandits}
  
 We consider a multi-agent conversational bandit setting involving \(M\) agents and a cloud server. At each round \(t \in \mathcal{T}\), a local agent \(m \in \mM\) selects an arm \(a_{m,t} \in \mathcal{A}_m\), which denotes one possible LLM response, and receives reward feedback \(r_{m,t}\) reflects the user satisfaction. 
 The user's preference for LLM responses is represented by an ``\textit{unknown}'' preference feature vector \(\vec{\theta}^{*} \in \mathbb{R}^d\), which all local agents aim to learn. For a local agent \(m \in \mM\), considering both the impact of the LLM response and the unknown user preference, the reward can be expressed as a linear combination with a noise term \(\eta_{m,t}\):
$r_{m,t} = \langle \vec{x}_{a_{m,t}}, \vec{\theta}^{*}_{t} \rangle + \eta_{m,t},$
where $\vec{x}_{a_{m,t}} \in \RR^d$ is the embedding feature vector of the response, to capture the textual information \cite{bubeck2023sparks,liu2023pre}. We will demonstrate the generalization of our model using two different open embedding approaches in Section \ref{sec:evaluation}.
Our objective is to design a policy that evaluates and selects arms (i.e., LLM responses) each round to minimize cumulative regret, defined as the difference between the cumulative rewards of our policy and the best \textit{unknown} policy across all local agents, tailored to personalized user preferences, which is defined as  \cite{Abbasi-Improved-2011,dai2024online,wu-2021-clustering-of-conversational}:
\begin{equation}\label{eq:regret}
    R_M(T) = \sum_{m=1}^{M} \sum_{t=1}^{T} \left( \vec{x}_{a_{m,t}^*}^\mathsf{T} \vec{\theta}^{*}_{t} - \vec{x}_{a_{m,t}}^\mathsf{T} \vec{\theta}^{*}_{t} \right),
  \end{equation}
where \(a_{m}^* \in \arg\max_{a \in \mathcal{A}_{m}} \vec{x}_a^\mathsf{T} \vec{\theta}^{*}\) denotes the optimal arm with the highest expected reward at local agent \(m \in \mM\). 

\subsection{Conversational Contextual Mechanism}

To adapt to and align with personalized user preferences, the cloud server also queries users' local agents for feedback, beyond selecting arms. However, directly using all responses risks \textit{information dispersion}, where ambiguous or overlapping groupings hinder clarity.
To counter this, as shown in Fig. \ref{fig:design}, we adopt the classical \textit{key terms} concept \cite{Zhang-Conversational-WWW20,wu-2021-clustering-of-conversational,zhang2014addressing,li_stylechat_2024} to group related arms under core concepts, e.g., ``C/C++'' or ``humorous tone''. Feedback on a key term applies only to its associated arms, enabling efficient preference inference with minimal interaction.
In our multi-agent framework, local agent \(m\) queries a user with key term \(k_{m} \in \mathcal{K}_{m} \subseteq \mathcal{K}\) based on prompt $y$. Feedback, given user preference and noise \(\widetilde{\eta}_{m,t}\), is modeled as: \(\widetilde{r}_{k_m,t} = \langle \tilde{\vec{x}}_{k_{m,t}}, \vec{\theta}^{*}_{t} \rangle + \widetilde{\eta}_{m,t}\), where  \(\tilde{\vec{x}}_k \in \RR^d\) is the embedding vector of key term $k$. Unlike prior conversational bandits \cite{Zhang-Conversational-WWW20,wu-2021-clustering-of-conversational,zhao-2022-knowledge-aware}, which use fixed conversation schedules (e.g., linear or logarithmic in round \(t\)) that may query unnecessarily, our algorithm (detailed in Section~\ref{sec:algorithms}) queries \textit{adaptively}, engaging users only when needed to refine preference estimates, improving efficiency and user experience (see Section~\ref{sec:theoretical-analysis} for detailed comparisons).

\begin{figure}[h]
   \centering
   \includegraphics[width=\linewidth]{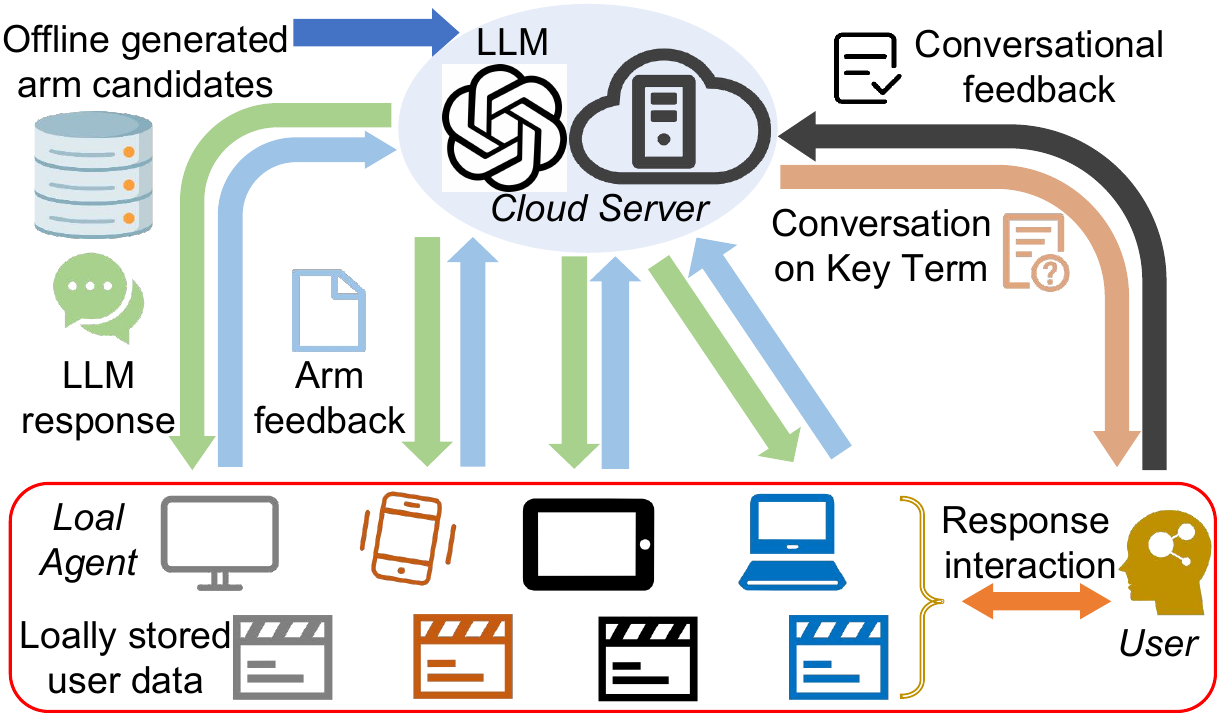}
   \caption{Multi-agent conversational bandit framework for online selecting LLM responses: \small{Local agents handle response selection (arms), while a central server manages conversation flow through key term selection. Server aggregates interaction data across multiple agents to accelerate user preference learning.}}
   \label{fig:design}
\end{figure}
\subsection{Distributed Communication Model}
 We consider a distributed model with \(M\) local agents and a cloud server, adopting a synchronous communication paradigm. In this setup, as shown in Fig. \ref{fig:design}, each local agent communicates with the cloud server by uploading and downloading data with negligible latency. Moreover, the local agents do not directly communicate with each other.
For simplicity, we focus on discrete-slot rounds solely for recording the selected arm. Querying key terms is interspersed with identifying LLM responses, allowing a key term to be queried and an arm to be pulled simultaneously. This aligns with the practical operations of conversational LLM systems. 
 Consistent with ~\cite{huang-2021-federated}, we define the standard communication cost as the cumulative count of scalar units transmitted between the cloud server and local agents, including both integers and real numbers.

\section{Algorithm Design}
\label{sec:algorithms}

  We present the design of multi-agent conversational online learning (\maco) algorithms.
  Then, we compare our design to the traditional phase elimination-based online learning algorithms. 
At a high level, multi-agent conversational bandits face the issue of agent heterogeneity: naively aggregating data across agents does not necessarily lead to better user preference estimation. To mitigate this, the central algorithmic challenge lies in how to \textit{guide exploration toward underrepresented directions in the feature space, ensuring accurate preference estimation across all relevant dimensions, with rigorous theoretical interpretability.}

\subsection{MACO Algorithm on Local Agent}
\label{sec:client-algorithm}

\begin{algorithm}[htb]
   \LinesNumbered
  \SetNlSty{textbf}{}{}
  \DontPrintSemicolon
  \small
  \SetKwComment{Comment}{$\triangleright$\ }{}
  \SetKwInput{KwInit}{Initialization}
  \KwIn{Local agent count \(M\), input dimension \(d\), arm pool size $A$, confidence value $\delta \in (0,1] $, phase $p=1$}
  \SetKwProg{Fn}{Function}{:}{}

  \While{\(T\) has not been reached}{
Calculate $\vec{M}_m^{p} = \sum_{a \in \mathcal{A}_m^{p}} \frac{1}{|\mathcal{A}_m^{p}|} \vec{x}_a \vec{x}_a^\mathsf{T}$ \label{line:information}; Diagonalize \(\vec{M}_m^{p}= \sum_{j=1}^{d} \lambda_{\vec{v}_j} \vec{v}_j \vec{v}_j^\mathsf{T}\)\label{line:diagonalization}\;
    Upload eigenvector \(\vec{v}_j\), if its corresponding eigenvalue satisfies \(\lambda_{\vec{v}_j} < h_{p} \coloneqq \frac{3}{4(1-2^{-2p})d}\)\label{line:check-eigenvalue}\;
    Download \(\mathcal{K}_m^{p}\) and \(\set{n_{m,k}^p}_{k \in \mathcal{K}_m^{p}}\) from the cloud server\label{line:download-key-terms}\;

\ForEach(\Comment*[f]{Conduct conversations}){\(k \in \mathcal{K}_m^{p}\) \label{line:conduct}}{
      Querying key term $k$ for \(n_{m,k}^p\) times;
      Receive rewards \(\set{\widetilde{r}_{k,t}}_{t \in \mathcal{\widetilde{T}}_{m,k}^{p}}\) from conversational feedback\;
    }
    
    \ForEach(\Comment*[f]{Pull arms}){\(a \in \mathcal{A}_m^{p}\)}{
    Set \(n_{m,a}^p=\left\lceil \frac{d}{2^{(-2p-1)}|\mathcal{A}_m^{p}|}\log \frac{2AM\log T}{\delta} \right\rceil\);
     Pull \(a\) for  $n_{m,a}^p$ times on the targeted LLM\;
      Receive rewards \(\set{r_{a,t}}_{t \in \mathcal{T}_{m,a}^{p}}\)  on the LLM response\label{line:finish}\;
    }
    
    \(
    \begin{aligned}
\text{Upload }\vec{G}_m^{p}& = \sum_{k \in \mathcal{K}_m^{p}} n_{m,k}^p \tilde{\vec{x}}_k \tilde{\vec{x}}_k^\mathsf{T} + \sum_{a \in \mathcal{A}_m^{p}} n_{m,a}^p\vec{x}_a \vec{x}_a^\mathsf{T}\text{, and}\\
      \vec{W}_m^{p} &= \sum_{t \in \bigcup_{k \in \mathcal{K}_m^{p}} \mathcal{\widetilde{T}}_{m,k}^{p}} \widetilde{r}_{k,t}\tilde{\vec{x}}_{k,t} + \sum_{t \in \bigcup_{a \in \mathcal{A}_m^{p}} \mathcal{T}_{m,a}^{p}} r_{a,t}\vec{x}_{a,t}  
    \end{aligned}
    \)\label{line:upload-data}\;
    Download \(\widehat{\vec{\theta}}_{p}\) from the cloud server;
  Update the active LLM response set $\mathcal{A}_m^{p+1}$ by eliminating sub-optimal LLM responses:
    \(\displaystyle 
    \mathcal{A}_m^{p+1} = \set{a \in \mathcal{A}_m^{p}: \max_{a^{\prime} \in \mathcal{A}_m^{p}} \inprod{\widehat{\vec{\theta}}_{p}}{\vec{x}_{a^{\prime}}-\vec{x}_a} \leq \frac{2^{-p+1}}{\sqrt{M}}}\) \label{line:elimination};
    \(p = p + 1\)\;
  }
  \caption{MACO on Local Agent (MACO-A)} \label{algo:client}
\end{algorithm}

As outlined in Algorithm~\ref{algo:client}, which is executed by the local agents and referred to as \texttt{MACO Agent (MACO-A)}, the online process of handling and updating information for LLM response evaluation within the multi-agent system operates as follows.  Define \(\mathcal{T}_{m,a}^{p}\) as the set of rounds where local agent \(m\) selects arm \(a\) in phase \(p\), \(\mathcal{\widetilde{T}}_{m,k}^{p}\) as the set of rounds when agent \(m\) conducts interaction on key term \(k\) in the same phase (like Fig. \ref{fig:example}), and $A \leq |\mathcal{A}|$ as the size of actually pulled arms from the LLM response set at each round.  During each phase \(p\), the local agent \(m \in \mM \) computes the \textit{information matrix} \(\vec{M}_m^p\) from its \textit{active arm set} \(\mathcal{A}_m^p\) (later updated in Line~\ref{line:elimination}). Specifically,  \(\vec{M}_m^p \coloneqq \sum_{a \in \mathcal{A}_m^p} \frac{1}{|\mathcal{A}_m^p|} \vec{x}_a \vec{x}_a^\mathsf{T}\), which captures principal directions in the feature space (Line~\ref{line:information}). The eigenvalue \(\lambda_{\vec{v}}\) of its eigenvector \(\vec{v}\) reflects variance along that direction, with larger values aiding precise estimation of \(\vec{\theta}^*\).
Local agent \(m\) then diagonalizes its information matrix \(\vec{M}_m^p = \sum_{j=1}^{d} \lambda_{\vec{v}_j} \vec{v}_j \vec{v}_j^\mathsf{T}\) to analyze all feature space directions (Line~\ref{line:diagonalization}). If an eigenvalue \(\lambda_{\vec{v}_j}\) falls below the threshold \(h_p \coloneqq \frac{3}{4(1-2^{-2p})d}\)  (determined by Lemma \ref{lemma:lower-bound-of-smallest-eigenvalue}), agent $m$ uploads the corresponding eigenvector to the cloud server (Line~\ref{line:check-eigenvalue}), addressing under-explored feature space regions to improve LLM response selection accuracy.
  
The cloud server processes uploaded data and sends local agent \(m\) a set of key terms \(\mathcal{K}_m^p\) with required repetition times \(\{n_{m,k}^p\}_{k \in \mathcal{K}_m^p}\) (Line~\ref{line:download-key-terms}). Agent \(m\) then queries these key terms and pulls arms as specified, ensuring thorough exploration of LLM responses. During this process, arm pulls and key term queries can be interleaved flexibly (Lines~\ref{line:conduct}-\ref{line:finish}), though shown sequentially for clarity. Agent \(m\) uploads data on pulled arms, key terms, and rewards, stored in matrices \(\vec{G}_m^p\) and \(\vec{W}_m^p\) (Line~\ref{line:upload-data}). Next, it downloads the updated preference estimate \(\widehat{\vec{\theta}}_p\) from the cloud server and updates its active arm set by eliminating suboptimal arms (Line~\ref{line:elimination}). This process enables agent \(m\) to adaptively and accurately evaluate and select LLM responses tailored to user preferences while minimizing external data sharing by only uploading aggregated data (\(\vec{G}_m^p\), \(\vec{W}_m^p\)) to the cloud server.

\subsection{MACO Algorithm on Cloud Server}
\label{sec:server-algorithm}
\begin{algorithm}[thb]
   \LinesNumbered
  \SetNlSty{textbf}{}{}
  \DontPrintSemicolon
  \small
  \SetKwInput{KwInit}{Initialization}
  \KwIn{Key term set \(\mathcal{K}\), coverage parameter \(\beta\) in Condition~\ref{cond:key-term-richness}.}
  \KwInit{Let \(p=1, \vec{G}=\vec{0}, \vec{W}=\vec{0}\)}
  \SetKwFunction{SupportExploration}{SupportExploration}
  \SetKwProg{Fn}{Function}{:}{}

  \While{\(T\) has not been reached}{
    \ForEach{\(m \in \mM\)}{
Receive all eigenvectors uploaded by local agent \(m\), and denote this set as \(\mathcal{S}_m\)\;
Initialize the set of key terms at phase \(p\)  as \(\mathcal{K}_m^{p} = \emptyset\)\;

      \ForEach{\(\vec{v}_j \in \mathcal{S}_m\)}{
        \(k = \argmax_{i \in \mathcal{K}} \tilde{\vec{x}}_i^\mathsf{T}\vec{v}_j\), \(\mathcal{K}_m^{p} = \mathcal{K}_m^{p} \cup \set{k}\) ;
        \(n_{m,k}^p = \left\lceil \frac{\frac{3}{2(1-2^{-2p})}-2d\lambda_{\vec{v}_j}}{\beta^2 2^{-2p}}\log \frac{2AM\log T}{\delta} \right\rceil\)\label{line:find-key-term}\;
      }
      Send \(\mathcal{K}_m^{p}\) and \(\set{n_{m,k}^p}_{\vec{k} \in \mathcal{K}_m^{p}}\) to local agent $m$\label{line:send-back-key-terms};
      Receive \(\vec{G}_m^{p}\) and \(\vec{W}_m^{p}\) from local agent $m$\;
    }
    \(\vec{G} = \sum_{p \in [p]} \sum_{m \in \mM} \vec{G}_m^p,\ \
    \vec{W} = \sum_{p \in [p]}\sum_{m \in \mM} \vec{W}_m^p\) \label{line:aggregate-data}\;
    Broadcast \(\widehat{\vec{\theta}}_{p} = \vec{G}^{-1} \vec{W}\) to all local agents \label{line:estimate-theta};
    \(p = p + 1\)\;
  }
  \caption{MACO on Cloud Server (MACO-S)} \label{algo:server}
\end{algorithm}

The \maco\ algorithm's cloud server component, \texttt{MACO Server (MACO-S)}, tackles the challenge of local agent heterogeneity in the multi-agent conversational bandits model, as noted in Section \ref{sec:introduction}. This diversity can hinder effective data aggregation, risking suboptimal estimation of the user preference \(\vec{\theta}^{*}\). To counter this, the cloud server strategically uses key terms to probe and enhance information in underrepresented feature space directions, thus improving the accuracy of the estimation process.

In Algorithm~\ref{algo:server}, the cloud server receives eigenvectors from local agents indicating under-explored directions in the LLM response space (Line~\ref{line:find-key-term}). It selects key terms \(k \in \mathcal{K}\) by maximizing the inner product with these directions and assigns repetition times \(n_{m,k}^p\), sending them to local agents (Line~\ref{line:send-back-key-terms}). This enables targeted exploration of LLM responses related to these key terms. The server then aggregates data from all agents and estimates the preference parameter \(\vec{\theta}^{*}\) using linear regression, reducing uncertainty and improving tailored response predictions (Lines~\ref{line:aggregate-data}-\ref{line:estimate-theta}). To ensure invertibility, especially for large \(d\), \(\vec{G}\) can be initialized as an identity matrix.

\subsection{Advantages over Phase Elimination Bandit}

As noted in Section \ref{sec:introduction}, LLM responses requiring online evaluation are finite, making phase elimination-based linear bandit algorithms (\texttt{PE-Lin}) preferable over classical conversational bandit frameworks \cite{Zhang-Conversational-WWW20} due to better performance guarantees for finite arm sets. We enhance \texttt{PE-Lin} \cite{lattimore-2020-bandit-algorithms}, where a learning agent estimates the preference vector \(\vec{\theta}^{*}\) using least squares with \emph{G-optimal design} to minimize prediction variance. This design optimizes a probability distribution \(\pi: \mathcal{X} \to [0,1]\) over arm feature vectors \(\mathcal{X} \subset \RR^d\), satisfying:
\begin{equation}\label{eq:G-optimal}
\begin{aligned}
\sum_{\vec{x} \in \mathcal{X}} \pi(\vec{x}) &= 1, \quad \vec{M}_m^p(\pi) = \sum_{\vec{x} \in \mathcal{X}} \pi(\vec{x}) \vec{x} \vec{x}^\mathsf{T}, \\
g(\pi) &= \max_{\vec{x} \in \mathcal{X}} \|\vec{x}\|_{\vec{M}(\pi)^{-1}}^2=d.
\end{aligned}
\end{equation}
The agent plays arms per \(\pi\), estimates \(\vec{\theta}^{*}\), and eliminates inferior arms. Computing \emph{G-optimal design} in multi-agent settings is inefficient \cite{huang-2021-federated}. Our algorithm, \maco, avoids this by leveraging multi-agent heterogeneity and adaptive conversational mechanisms, reducing computation costs. Running \texttt{PE-Lin} independently on each agent with server aggregation yields a regret bound of \(\mathcal{\widetilde{O}}(M \sqrt{dT})\), equivalent to no communication. Our approach improves this to \(\mathcal{\widetilde{O}}(\sqrt{dMT})\) by enabling information sharing through lightweight conversations, as we will analyze in Section~\ref{sec:theoretical-analysis}.

\section{Performance Analysis}
\label{sec:theoretical-analysis}
This section presents the theoretical results of \maco: regret, communication costs, and conversation frequency.

Following common practices in~\cite{xie-2021-comparison-based,dai2024misspecification}, we assume for any arm $a$ and key term $k$, \(\|\vec{x}_a\| = \|\tilde{\vec{x}}_k\| = 1\). The length of preference vector \(\vec{\theta}^{*}\) is bounded by 1, and the noise terms \(\eta_{m,t}\) and \(\widetilde{\eta}_{m,t}\) are  1-subgaussian. We first present a ``\textit{technical condition}'' that addresses general issues related to feature space coverage.

\begin{condition}[Feature Space Coverage]\label{cond:key-term-richness}
 We say that a key term set \(\mathcal{K}\) as \emph{sufficiently rich} for covering the feature space if, for any unit vector \(\vec{v} \in \RR^d\), there exists a key term \(k \in \mathcal{K}\) such that its feature vector $\tilde{\vec{x}}_k$ satisfies \(\tilde{\vec{x}}_k^\mathsf{T}\vec{v} \geq \beta\), where\(\beta \in (0,1]\) is a coverage parameter close to 1. 
\end{condition}

\begin{remark}
Condition~\ref{cond:key-term-richness} ensures key terms in \(\mathcal{K}\) are well-distributed across the feature space, enabling effective uncertainty reduction for each agent. This is satisfied if \(\mathcal{K}\) contains an orthonormal basis of \(\RR^d\). It allows for bypassing the \emph{G-optimal design} used in traditional elimination-based algorithms to minimize prediction variance.
\end{remark}

For sufficiently rich key term sets, based on Condition \ref{cond:key-term-richness}, we provide the following theorems.

\begin{restatable}[Regret Bounds]{theorem}{restateregret}\label{thm:regret}
 We have the following upper and lower regret bounds:
  \begin{enumerate}[leftmargin=*]
    \item \textbf{Upper Bound:} With probability at least \(1-\delta\), the  regret is bounded by \(\mathcal{O}(\sqrt{dMT\log \frac{AM\log T}{\delta}})\).
        \item \textbf{Lower Bound:} For any policy that selects at most one key term per round, there exists an instance where the policy incurs an expected regret of at least \(\Omega(\sqrt{dMT})\). 
  \end{enumerate}
\end{restatable}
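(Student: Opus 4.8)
For the upper bound, I would follow the phase-elimination template, replacing the role of the \emph{G-optimal design} with the eigenvalue-threshold mechanism of Lines~\ref{line:check-eigenvalue} and~\ref{line:find-key-term}. The pivotal step, which is the content of Lemma~\ref{lemma:lower-bound-of-smallest-eigenvalue}, is to show that within each phase $p$ every local agent's phase-$p$ design matrix $\vec{G}_m^p$ obeys $\lambda_{\min}(\vec{G}_m^p)\ge c_p=\Theta\!\left(2^{2p}\log\tfrac{AM\log T}{\delta}\right)$. I would verify this directionally in the eigenbasis of $\vec{M}_m^p$: along an eigendirection $\vec{v}_j$ with $\lambda_{\vec{v}_j}\ge h_p=\tfrac{3}{4(1-2^{-2p})d}$ the $n_{m,a}^p$ arm pulls alone already deposit at least $c_p$, whereas along a direction with $\lambda_{\vec{v}_j}<h_p$ the targeted key-term queries supply the exact shortfall, since Condition~\ref{cond:key-term-richness} forces the chosen key term to obey $\tilde{\vec{x}}_k^\mathsf{T}\vec{v}_j\ge\beta$ and $n_{m,k}^p$ is calibrated so the two contributions meet precisely at $h_p$. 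Summing over agents and applying Weyl's inequality then gives $\lambda_{\min}\!\big(\sum_m\vec{G}_m^p\big)\ge M c_p$, hence $\|\vec{x}_a\|_{\vec{G}^{-1}}\le 1/\sqrt{M c_p}$ for every surviving arm.

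Feeding this into a standard self-normalized sub-Gaussian tail bound, and taking a union bound over the $\le A$ arms, the $M$ agents, and the $O(\log T)$ phases, the logarithmic factor in $c_p$ cancels the one from the tail bound and yields the clean simultaneous width $|\langle\widehat{\vec{\theta}}_p-\vec{\theta}^*,\vec{x}_a\rangle|=O(2^{-p}/\sqrt{M})$. This certifies the elimination rule of Line~\ref{line:elimination}: the per-agent optimal arm survives every phase, and each arm still active in phase $p+1$ has true gap $O(2^{-p}/\sqrt{M})$. Since each agent issues $\sum_a n_{m,a}^p=O(d\,2^{2p}\log\tfrac{AM\log T}{\delta})$ pulls in phase $p$, the phase-$p$ regret aggregated over all $M$ agents is $(\text{pulls})\times(\text{gap})\times M=O(\sqrt{M}\,d\,2^{p}\log\tfrac{AM\log T}{\delta})$; evaluating this geometric sum up to the last phase $P$, where the horizon constraint forces $2^{P}=\Theta(\sqrt{T/d})$ up to logarithmic factors, yields the claimed $O(\sqrt{dMT\log\tfrac{AM\log T}{\delta}})$.

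For the lower bound, I would first reduce to the regime where all $M$ agents share $\vec{\theta}^*$, so that their feedback pools into a single $d$-dimensional estimation problem, and then run an information-theoretic minimax argument. The crucial observation is that \emph{every} unit of feedback---the $\le MT$ arm rewards together with, by the ``one key term per round'' cap, the $\le MT$ conversational rewards---is a unit-norm, $1$-subgaussian linear measurement of $\vec{\theta}^*$, so the total learning budget is at most $2MT$ such measurements. I would instantiate a $d$-dimensional hard family (e.g.\ a scaled sign-pattern hypercube $\vec{\theta}^{*}_\zeta=\tfrac{\Delta}{\sqrt d}\zeta$, $\zeta\in\{\pm1\}^d$, with per-round regret proportional to the Hamming distance to $\zeta$) and bound the summed per-coordinate KL divergence between $\zeta$ and any single-coordinate flip by $\tfrac{2\Delta^2}{d}\cdot 2MT$. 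Tuning $\Delta$ to the resolution limit that makes this budget $\Theta(d)$ leaves a constant fraction of coordinates statistically unidentifiable throughout, forcing a constant fraction of the $MT$ arm-pull rounds to play a mis-signed coordinate and thus driving the expected cumulative regret to $\Omega(\sqrt{dMT})$.

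The main difficulty sits at the two pivots. On the upper-bound side it is Lemma~\ref{lemma:lower-bound-of-smallest-eigenvalue} itself: one must check that $h_p$, $n_{m,a}^p$, and $n_{m,k}^p$ are calibrated so the arm-pull and key-term contributions balance \emph{exactly} at the threshold $h_p$, since this is the precise point where our scheme supersedes the G-optimal design and any slack would spoil the $2^{-p}/\sqrt{M}$ width. On the lower-bound side the delicate point is ruling out that the free conversational feedback accelerates learning beyond a constant factor: a \emph{sparse} hard instance could be pinpointed in only $O(\log d)$ adaptive key-term measurements (compressed sensing), so the construction must be \emph{dense}, for which resolving all $d$ coordinates provably consumes $\Omega(d^2/\Delta^2)$ unit-norm measurements regardless of adaptivity---whence the capped key-term budget merely scales the effective sample count by a constant and the $\sqrt{dMT}$ barrier survives.
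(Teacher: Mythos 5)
Your upper-bound argument is essentially the paper's own proof: the same calibration of $h_p$, $n_{m,a}^p$ and $n_{m,k}^p$ so that arm pulls and targeted key-term queries jointly force $\lambda_{\min}(\vec{G}) = \Omega\bigl(2^{2p} M \log \tfrac{AM\log T}{\delta}\bigr)$ (this is exactly the content of Lemmas~\ref{lemma:lower-bound-of-smallest-eigenvalue} and~\ref{lemma:bad-event}), the same $2^{-p}/\sqrt{M}$ simultaneous confidence width via least-squares concentration plus a union bound over arms, agents and phases, and the same phase accounting with $2^{P} = \Theta(\sqrt{T/d})$ yielding $\mathcal{O}(\sqrt{dMT\log\tfrac{AM\log T}{\delta}})$.

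The lower bound is where you genuinely diverge. The paper runs a two-point test: arm and key-term sets containing the standard basis, $\vec{\theta}=(\Delta,0,\dots,0)^\mathsf{T}$ versus $\vec{\theta}'$ with an extra $2\Delta$ placed on the coordinate $\ell$ that is \emph{least probed by arms and key terms combined} (pigeonhole gives $\E[N_\ell(MT)]\le \tfrac{MT}{d-1}$ and $\E[\widetilde{N}_\ell(MT)]\le \tfrac{MT}{d-1}$), followed by Bretagnolle--Huber with $\Delta=\sqrt{(d-1)/(MT)}$. You instead run an Assouad-type argument on the scaled sign hypercube, controlling all $d$ single-coordinate flips at once through a global budget of at most $2MT$ unit-norm measurements, so the summed per-coordinate KL is at most $\tfrac{4\Delta^2 MT}{d}$; the one-key-term-per-round cap enters both proofs in the same benign way (it at most doubles the information, hence costs only a constant). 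Both routes are sound, but they buy different things. Your tuning $\Delta=\Theta(d/\sqrt{MT})$ actually produces the \emph{stronger} bound $\Omega(d\sqrt{MT})$ --- you understate it as $\Omega(\sqrt{dMT})$, which it implies --- yet only on an instance with $A=2^d$ arms, where the upper bound itself degrades to $\widetilde{\mathcal{O}}(d\sqrt{MT})$ because $\log A = d$; so your construction shows the $\sqrt{\log A}$ factor is unavoidable, but it does not exhibit a matching instance in the regime of moderate $A$. The paper's instance needs only $A\ge d$ arms and therefore certifies near-minimax-optimality on the same instance class its upper bound addresses. Two details to make your version airtight: the hypercube instance requires $\|\vec{\theta}^*\|=\Delta\le 1$, i.e.\ $MT\gtrsim d^2$, a horizon restriction the paper's choice avoids; and the per-coordinate unidentifiability step should be stated with the KL chain rule for adaptive policies, since the measurement allocation is itself data-dependent.
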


\begin{remark}\label{remark:regret}
Theorem~\ref{thm:regret} highlights key performance insights for our approach: 1) For \(M=1\), the problem reduces to single-agent conversational bandits, with regret \(\mathcal{\widetilde{O}}(\sqrt{dT})\), outperforming prior bounds of \(\mathcal{\widetilde{O}}(d\sqrt{T})\) \cite{Wang-2023-Efficient,Zhang-Conversational-WWW20} by using phase elimination on finite arm sets, especially for high-dimensional LLM feature vectors. 2) In multi-agent settings, our upper bound matches the results of \cite{huang-2021-federated,li2024fedconpe} while avoiding computationally expensive \emph{G-optimal design}, accelerating the online process. Moreover, unlike \cite{wang-2020-distributed-bandit} which assumes all local agents share the same arm set, agents here have heterogeneous arm sets.
3) Collectively, the regret upper and lower bounds indicate \maco is minimax optimal up to a logarithmic factor, showing its tightness.
\end{remark}

\begin{restatable}[Communication Cost]{theorem}{restatecommunication}\label{thm:communication}
The communication cost scales in  \(\mathcal{O}(\log T)\), independent of arm pool size \(A\).
\end{restatable}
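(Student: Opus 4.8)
The plan is to decompose the total communication into contributions from each phase and each agent, bound the per-phase per-agent cost by $\mathcal{O}(d^2)$, and then multiply by the number of agents $M$ and the number of phases, which I will show is $\mathcal{O}(\log T)$.

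First, I would enumerate every scalar quantity transmitted during one phase $p$ between a single agent $m$ and the cloud server. In the uplink, agent $m$ sends: (i) the eigenvectors whose eigenvalues fall below $h_p$ (Line~\ref{line:check-eigenvalue} of Algorithm~\ref{algo:client}); since $\vec{M}_m^p$ has exactly $d$ eigenvectors and each lives in $\RR^d$, this costs at most $d\cdot d=\mathcal{O}(d^2)$ scalars; and (ii) the aggregate matrix $\vec{G}_m^p\in\RR^{d\times d}$ together with the vector $\vec{W}_m^p\in\RR^d$ (Line~\ref{line:upload-data}), costing $\mathcal{O}(d^2)$ and $\mathcal{O}(d)$ respectively. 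In the downlink, the server returns: (iii) the key-term indices $\mathcal{K}_m^p$ with their repetition counts $\set{n_{m,k}^p}$; since Algorithm~\ref{algo:server} generates at most one key term per uploaded eigenvector (Line~\ref{line:find-key-term}), we have $|\mathcal{K}_m^p|\le d$, costing $\mathcal{O}(d)$; and (iv) the estimate $\widehat{\vec{\theta}}_p\in\RR^d$ (Line~\ref{line:estimate-theta}), costing $\mathcal{O}(d)$. Summing these items, the per-phase per-agent communication is dominated by the eigenvector upload and by $\vec{G}_m^p$, giving a total of $\mathcal{O}(d^2)$.

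Next I would bound the number of phases $P$. The key observation is that the per-phase sample counts grow geometrically: since $n_{m,a}^p=\lceil\frac{d}{2^{-2p-1}|\mathcal{A}_m^p|}\log\frac{2AM\log T}{\delta}\rceil\ge\frac{d\,2^{2p+1}}{|\mathcal{A}_m^p|}\log\frac{2AM\log T}{\delta}$, summing over the active set cancels the $|\mathcal{A}_m^p|$ in the denominator and yields $\sum_{a\in\mathcal{A}_m^p}n_{m,a}^p\ge d\,2^{2p+1}\log\frac{2AM\log T}{\delta}$. Because each round contributes at most one recorded arm pull per agent, the cumulative pulls across all phases cannot exceed $T$; in particular the final phase $P$ must satisfy $d\,2^{2P+1}\log\frac{2AM\log T}{\delta}\le T$, so that $P\le\tfrac12\log_2\frac{T}{d\log(2AM\log T/\delta)}=\mathcal{O}(\log T)$.

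Multiplying the per-phase per-agent cost $\mathcal{O}(d^2)$ by the number of agents $M$ and the phase count $\mathcal{O}(\log T)$ gives the claimed bound $\mathcal{O}(d^2M\log T)$. I expect the main obstacle to be the phase-count step: one must verify that the geometric growth is genuinely driven by the $2^{2p}$ factor and is not masked by the shrinking active set $|\mathcal{A}_m^p|$, which is precisely why summing $n_{m,a}^p$ over $a\in\mathcal{A}_m^p$ is the right move. This same phase count governs the $\sqrt{\log T}$-type factors in Theorem~\ref{thm:regret}, so I would expect it to already be available from that analysis and directly reusable here.
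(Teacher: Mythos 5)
Your proposal is correct and follows essentially the same route as the paper: enumerate the per-phase, per-agent uplink and downlink quantities to get $\mathcal{O}(d^2)$ scalars (dominated by the eigenvector upload and $\vec{G}_m^p$), then multiply by $M$ agents and the $\mathcal{O}(\log T)$ phase count. Your explicit derivation of $P = \mathcal{O}(\log T)$ from $\sum_{a\in\mathcal{A}_m^p} n_{m,a}^p \geq d\,2^{2p+1}\log\frac{2AM\log T}{\delta}$ and the budget $T$ is exactly the argument the paper uses (it appears in its regret analysis, Section~\ref{subsec:analysis}, and is merely cited in the communication-cost proof), so nothing is missing.
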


\begin{remark}\label{remark:communication}
The arm pool size $A$ can reach thousands due to the diversity of LLM-generated responses. In contrast to \cite{huang-2021-federated}, where the communication cost scales as $\mathcal{O}(A \log T)$, our approach significantly reduces overhead. Specifically, local agents process data independently and transmit only aggregated results to the cloud server, avoiding the need to upload full arm sets of size $\mathcal{O}(A)$.
\end{remark}

\begin{restatable}[Bound on Conversation Frequency]{theorem}{restateconversation}\label{thm:conversation}
  For any local agent \(m \in \mM\) during phase \(p\), let \(\gamma = \lambda_{\text{min}}(\vec{M}_m^{p})\), where \(\lambda_{\text{min}}\) denotes the smallest eigenvalue, we have: 1) If \(\gamma \geq h_p\), no conversations will be initiated. 2) If \(\gamma < h_p\), the fraction of conversations relative to the total phase length is capped at \(\beta^{-2}(\frac{3}{4(1-2^{-2p})}-d\gamma)\).
\end{restatable}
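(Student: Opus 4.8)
The plan is to read both claims directly off the conversation-triggering mechanism (the eigenvalue test in Line~\ref{line:check-eigenvalue} and key-term download in Line~\ref{line:download-key-terms} of Algorithm~\ref{algo:client}, together with Line~\ref{line:find-key-term} of Algorithm~\ref{algo:server}), and then to compare the total number of queried key terms against the total number of arm pulls prescribed in phase $p$. Throughout I write $L \coloneqq \log\frac{2AM\log T}{\delta}$ and recall $h_p = \frac{3}{4(1-2^{-2p})d}$.

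For the first claim I would argue purely from the eigenvalue test. After diagonalizing $\vec{M}_m^p = \sum_{j=1}^d \lambda_{\vec{v}_j}\vec{v}_j\vec{v}_j^\mathsf{T}$, every eigenvalue satisfies $\lambda_{\vec{v}_j} \ge \lambda_{\min}(\vec{M}_m^p) = \gamma$. If $\gamma \ge h_p$, then no eigenvalue falls below the threshold $h_p$, so the upload condition $\lambda_{\vec{v}_j} < h_p$ in Line~\ref{line:check-eigenvalue} is never triggered and the set $\mathcal{S}_m$ received by the server is empty. Consequently the inner loop of Algorithm~\ref{algo:server} produces $\mathcal{K}_m^p = \emptyset$, the agent downloads no key terms in Line~\ref{line:download-key-terms}, and no conversation is initiated during phase $p$. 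This settles the first part.

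For the second claim, let $S = \{\,j : \lambda_{\vec{v}_j} < h_p\,\}$ denote the set of under-explored directions; since $\vec{M}_m^p$ has exactly $d$ eigenvalues, $|S| \le d$. The number of conversations in phase $p$ is $\sum_{j \in S} n_{m,k}^p$ with $n_{m,k}^p$ as defined in Line~\ref{line:find-key-term}. The key observation is that $n_{m,k}^p$ is decreasing in its eigenvalue $\lambda_{\vec{v}_j}$, so each term is maximized at the smallest eigenvalue $\gamma$; combined with $|S|\le d$ this gives
\[
\sum_{j \in S} n_{m,k}^p \;\le\; d \left\lceil \frac{\tfrac{3}{2(1-2^{-2p})} - 2d\gamma}{\beta^2 2^{-2p}}\,L \right\rceil .
\]
The total phase length is the number of arm pulls, which by summing the uniform per-arm allocation $n_{m,a}^p$ over the $|\mathcal{A}_m^p|$ active arms equals $\sum_{a} n_{m,a}^p = 2d\cdot 2^{2p} L$. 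Forming the ratio, the factor $d$ from $|S|\le d$ cancels the factor $d$ in the arm-pull count, while the $2^{2p}$ and $L$ terms also cancel, leaving
\[
\frac{\sum_{j\in S} n_{m,k}^p}{\sum_{a} n_{m,a}^p} \;\le\; \frac{\tfrac{3}{2(1-2^{-2p})} - 2d\gamma}{2\beta^2} \;=\; \frac{1}{\beta^2}\left( \frac{3}{4(1-2^{-2p})} - d\gamma \right),
\]
which is exactly the claimed cap; here $\beta$ enters only through the denominator of $n_{m,k}^p$, as guaranteed by Condition~\ref{cond:key-term-richness}.

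The only genuinely structural ingredients are (i) $|S| \le d$, and (ii) the monotonicity of $n_{m,k}^p$ in the eigenvalue, both of which are immediate. The routine but delicate part will be the bookkeeping around the ceiling operators: the ceilings inflate the conversation count by at most $|S|\le d$ and may slightly deflate the per-arm count, so I would either state the bound for the dominant ($L$-scaled) term or carry an additive $O(d)$ correction, which is negligible against the $2^{2p}L$ scaling of the phase length. This rounding accounting is the main obstacle to a fully rigorous statement, but it does not affect the leading-order cap.
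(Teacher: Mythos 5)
Your proof is correct and follows essentially the same route as the paper's: part 1 is read directly off the eigenvalue test in Line~\ref{line:check-eigenvalue}, and part 2 bounds each key-term repetition count by its value at the smallest eigenvalue $\gamma$, sums over the at most $d$ under-explored directions, and divides by the $2^{2p+1}d\log\frac{2AM\log T}{\delta}$ arm pulls to obtain exactly the claimed cap $\beta^{-2}\bigl(\frac{3}{4(1-2^{-2p})}-d\gamma\bigr)$. Your explicit accounting of the ceiling terms is, if anything, slightly more careful than the paper's own proof, which silently drops the ceilings on the key-term side while keeping the inequality on the arm-pull side.
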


\begin{remark}\label{remark:conversation}
\maco employs an adaptive conversation method, unlike deterministic schedules \(b(t)\) (e.g., linear or logarithmic) used in prior conversational bandit studies \cite{Zhang-Conversational-WWW20,Wang-2023-Efficient,xie2021comparison}. Fixed-interval conversations can be inefficient, especially when user preferences are already well understood. In contrast, our model dynamically adjusts based on gaps in user preferences, enabling a more realistic interaction paradigm.
\end{remark}

\section{Performance Evaluation}
\label{sec:evaluation}
\begin{figure*}
    \centering
    \includegraphics[width=\linewidth]{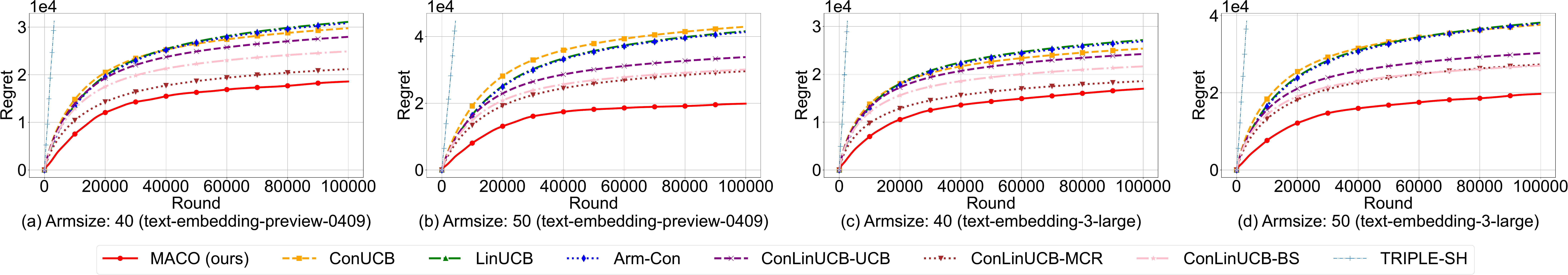}
     
    \caption{Regret on embedding models from Google and OpenAI across different arm pool sizes $A$.}
    \label{fig:result_simple}
        
\end{figure*}

\subsection{Experimental Settings}
\label{subsec:experimental-settings}

\textbf{Embedding Models.} We demonstrate our framework's generalization capabilities using different open embedding models from Google and OpenAI, which generate the embedding feature vector $\vec{x}_{a} \in \RR^d$ for the corresponding arm $a$ (i.e., response) to capture text information:

\ding{182} \textbf{Text-embedding-preview-0409:} Google's advanced embedding model, which streamlines synthetic training data creation by generating queries and task descriptions \cite{lee2024gecko}. 
 \ding{183}\textbf{Text-embedding-3-large:} OpenAI's generation embedding model, which surpasses its predecessor, though its technical details remain undisclosed \cite{muennighoff2023mtebmassivetextembedding}.
 \ding{184}\textbf{Text-embedding-multilingual-002:} To evaluate the algorithm's generalization across different languages, we also employ a multilingual embedding model of Google \cite{lee2024gecko}.
 
\textbf{Dataset Preprocessing.} We implement two response settings using the embedding models \ding{182} and \ding{183}, leveraging real-world datasets and the open-source Llama-3-8B-Instruct LLM \cite{Ollama, sahoo2024systematicsurveypromptengineering}. The first configuration generates 510 unique response arms from combinations of 13 key terms (e.g., ``humorous,'' ``helpful'') \cite{köpf2023openassistantconversationsdemocratizing}, while the second yields 455 arms by pairing five questions with two keyword styles. Both settings optimize response selection through cosine similarity between user preference and response feature vectors \cite{reimers-gurevych-2019-sentence}. Details are provided in \Cref{app:experiment}. 

In addition to Llama-generated responses, we evaluate \maco on the StyleEval \cite{li_stylechat_2024} and Gretel Multilingual \cite{gretel-synthetic-multilingual-llm-prompts-2024} datasets, which encompass diverse response styles (e.g., ``polite,'' ``romantic'') and multilingual key-term variations for conversational LLM interfaces. The original multilingual dataset, limited to six languages, was expanded to 20 languages using GPT-4o to support larger-scale experiments \cite{openai_gpt35turbo}. As mentioned, the embedding model \ding{184} is employed for multilingual support.
A sample LLM response conversation is included in Fig. \ref{fig:prompt}  for reference.

\begin{figure}[!ht]
    \centering  
    \includegraphics[width=\linewidth]{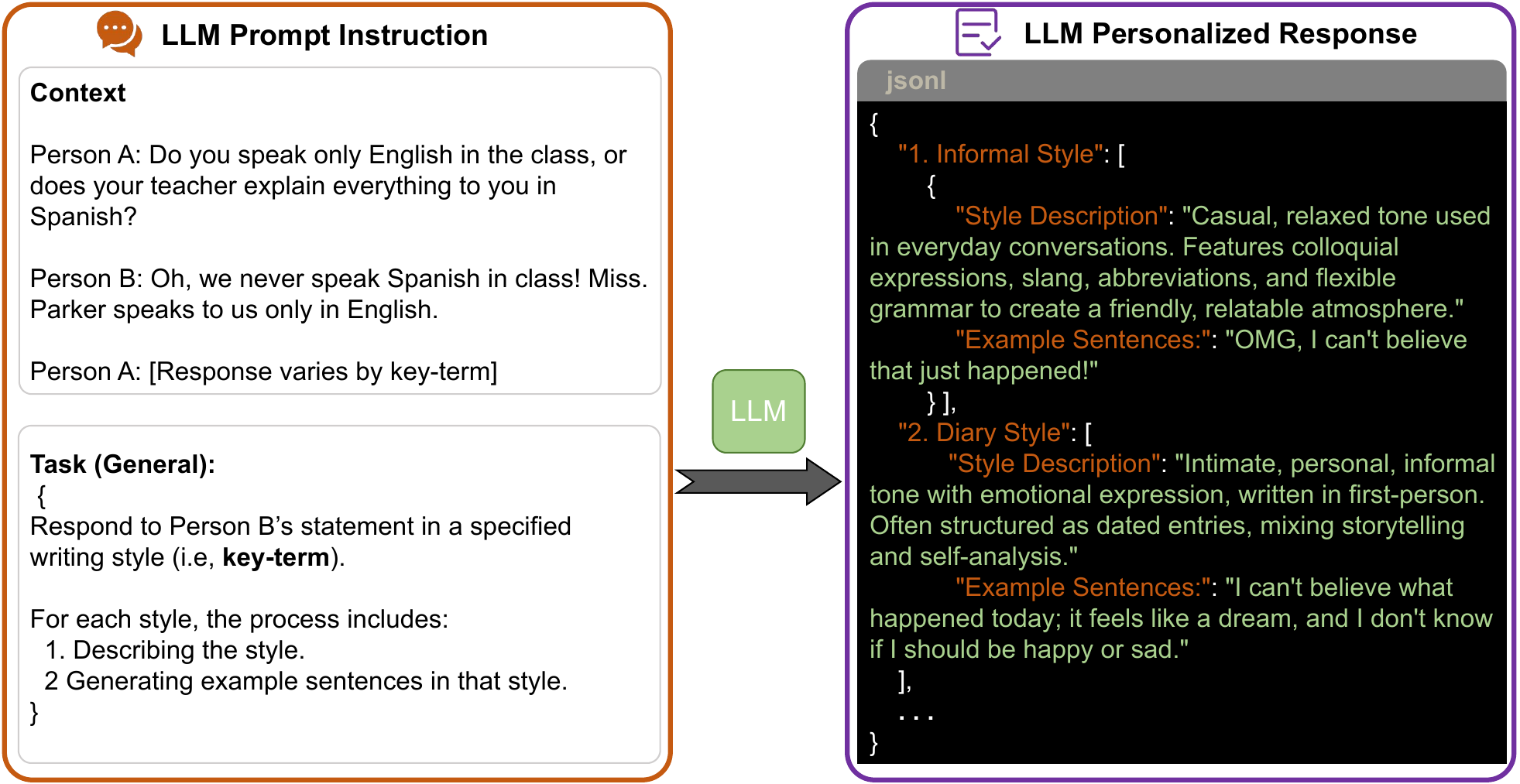}
    \caption{A sample LLM response conversation for user-aligned evaluation.}
    \label{fig:prompt}

\end{figure}

   \textbf{Comparison Algorithms.} Results are averaged over five trials on a Linux Ubuntu machine (kernel 6.5.0) with a 5.40 GHz 13th Gen Intel Core i7-13700KF CPU and 32GB RAM. We set coverage parameter $\beta = 1$ and confidence parameter  
 $\delta = 0.1$, with an ablation study in \Cref{app:experiment}.
Baselines, run independently on local agents, include:
 1) \texttt{TRIPLE-SH} \cite{shi2024best}: Select optimal prompts for LLMs by adaptively eliminating arms with poor performance, where we directly set each arm as the corresponding LLM response.
  2) \texttt{LinUCB}~\cite{Abbasi-Improved-2011}: Online select arms and estimate user preference for \emph{infinite} arm sets, excluding the conversational setting.
 3) \texttt{Arm-Con}~\cite{christakopoulou-2016-towards-conversational}: Initiate conversations on user preference about arms, and use \texttt{LinUCB} for arm selection.
4) \texttt{ConUCB}~\cite{Zhang-Conversational-WWW20}: Query key terms if conversations are allowed and utilize conversational feedback to accelerate learning.
5) \texttt{ConLinUCB}~\cite{Wang-2023-Efficient}: Includes \texttt{ConLinUCB-BS} (uses barycentric spanner for conversations), \texttt{ConLinUCB-MCR} (selects terms with largest confidence radius), and \texttt{ConLinUCB-UCB} (uses LinUCB for term selection).\footnote{Although practical scenarios often have pre-existing information such as user preferences, our study focuses on the online learning process without assuming any \textit{prior knowledge}.}




\subsection{Evaluation Results}
\label{sec:evaluation-results}

\begin{figure*}
    \centering
    \includegraphics[width=\linewidth]{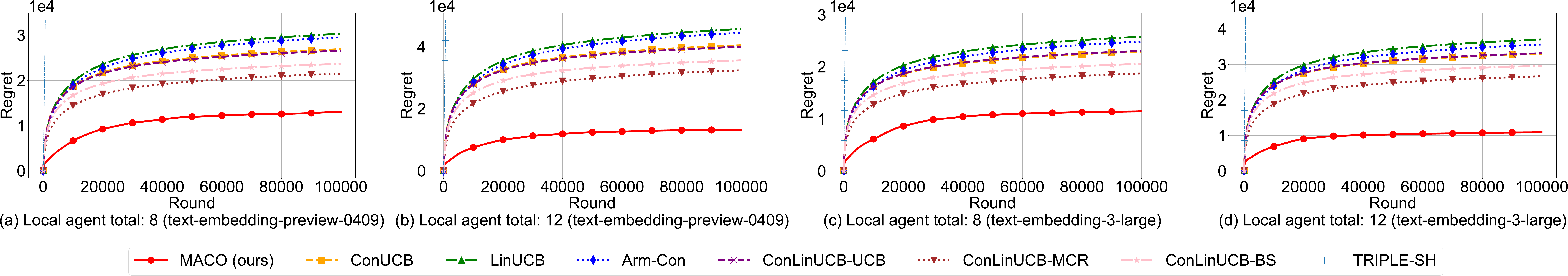}
     
    \caption{Regret on embedding models from Google and OpenAI across different agent counts $M$.}
    \label{fig:result_complex}
    
\end{figure*}

\begin{table*}[t]
    \centering
    \caption{Cumulative regret for algorithms under two embeddings on the StyleEval dataset.}\label{tab:realdata1}
    \resizebox{\linewidth}{!}{
    \begin{tabular}{lcccccccc}
        \toprule
        & \multicolumn{4}{c}{Text Embedding Model of Google} & \multicolumn{4}{c}{Text Embedding Model of OpenAI} \\
        \cmidrule(lr){2-5} \cmidrule(lr){6-9}
        Algorithm & $M=4$ & $M=8$ & $M=12$ & $M=16$ & $M=4$ & $M=8$ & $M=12$ & $M=16$ \\
        \midrule
        \small{\texttt{TRIPLE-SH}} 
            & 5847.31$\pm$19.80 & 11478.73$\pm$23.16 & 17151.16$\pm$36.35 & 22673.76$\pm$21.64 
            & 7736.87$\pm$36.32  & 15302.63$\pm$62.52 & 22777.67$\pm$62.34 & 30138.45$\pm$25.60 \\
        \texttt{LinUCB}   
            & 495.67$\pm$10.87   & 1007.75$\pm$20.83   & 1510.77$\pm$31.77   & 2025.16$\pm$23.69   
            & 401.16$\pm$14.43    & 808.91$\pm$30.78    & 1214.77$\pm$30.88    & 1625.90$\pm$22.20  \\
        \texttt{Arm-Con}  
            & 425.89$\pm$12.13   & 871.32$\pm$15.63    & 1295.84$\pm$32.97   & 1742.96$\pm$53.81   
            & 319.82$\pm$14.35    & 645.34$\pm$30.84    & 953.95$\pm$38.38     & 1297.47$\pm$42.48  \\
        \texttt{ConUCB}   
            & 237.62$\pm$5.81    & 479.36$\pm$9.91     & 721.05$\pm$18.88    & 960.33$\pm$14.34    
            & 190.36$\pm$7.34     & 382.64$\pm$15.52    & 583.57$\pm$17.97     & 779.50$\pm$10.44   \\
        \tiny{\texttt{ConLinUCB-BS}} 
            & 991.73$\pm$27.14   & 2000.84$\pm$45.84   & 3005.28$\pm$67.18   & 4011.74$\pm$16.24   
            & 781.52$\pm$27.47    & 1583.90$\pm$58.27   & 2385.23$\pm$56.67    & 3177.74$\pm$49.60  \\
        \tiny{\texttt{ConLinUCB-MCR}}
            & 439.70$\pm$13.92   & 895.52$\pm$15.70    & 1329.91$\pm$31.22   & 1793.32$\pm$16.30   
            & 382.10$\pm$15.50    & 769.74$\pm$29.30    & 1146.81$\pm$35.96    & 1536.44$\pm$34.82  \\
        \tiny{\texttt{ConLinUCB-UCB}}
            & 1145.26$\pm$21.32  & 2313.41$\pm$33.67   & 3463.31$\pm$52.00   & 4634.67$\pm$43.57   
            & 937.34$\pm$31.69    & 1889.40$\pm$50.69   & 2836.83$\pm$46.72    & 3805.12$\pm$39.17  \\
        \texttt{MACO} (Ours)     
          & \textbf{39.04$\pm$1.58} & \textbf{77.11$\pm$2.01} & \textbf{115.48$\pm$2.78} & \textbf{153.12$\pm$1.55} 
            & \textbf{55.03$\pm$1.98} & \textbf{108.50$\pm$2.98} & \textbf{162.42$\pm$3.26} & \textbf{215.71$\pm$3.37} \\
        \bottomrule
    \end{tabular}
    }
    
\end{table*}

  \textbf{Regret Across Different Arm Pool Sizes.}
We compare \maco's cumulative regret against seven baseline algorithms in the first setting with \(M=4\) local agents, using two embedding models. We vary arm pool sizes (\(A=40, 50\)), randomly selecting \(A\) arms from \(\mathcal{A}\) per local agent. Fig. \ref{fig:result_simple} shows that non-conversational algorithms (\texttt{LinUCB}, \texttt{Arm-Con}) perform worst, while \maco outperforms all baselines, improving at least 8.29\% over the best baseline, \texttt{ConLinUCB-MCR}. This stems from \maco's multi-agent framework, where the cloud server aggregates local agent data to better estimate user preferences. Increasing \(A\) has minimal impact on \maco's regret, supporting Theorem~\ref{thm:regret}, which indicates regret grows at a square-root logarithmic rate with \(A\).

  \textbf{Regret Across Different Numbers of Local Agents.}
We evaluate regret in the second setting with arm pool size \(A=40\), using the same embedding models, and varying the number of local agents (\(M=8, 12\)). Larger \(M\) reflects practical platforms grouping similar users to share learning, so we test \maco's performance with increased \(M\). Fig. \ref{fig:result_complex} shows that without a multi-agent framework, baseline algorithms' regrets grow linearly with \(M\), following \(\mathcal{\widetilde{O}}(dM\sqrt{T})\). In contrast, \maco leverages aggregated data from all agents, scaling regret as \(\mathcal{\widetilde{O}}(\sqrt{dMT})\), significantly reducing regret growth. This highlights \maco's effective multi-agent approach for online LLM response identification. Fig. \ref{fig:result_diff_clients}  further illustrates this trend.

\begin{figure}[!ht]
    \centering  
    
    \includegraphics[width=\linewidth]{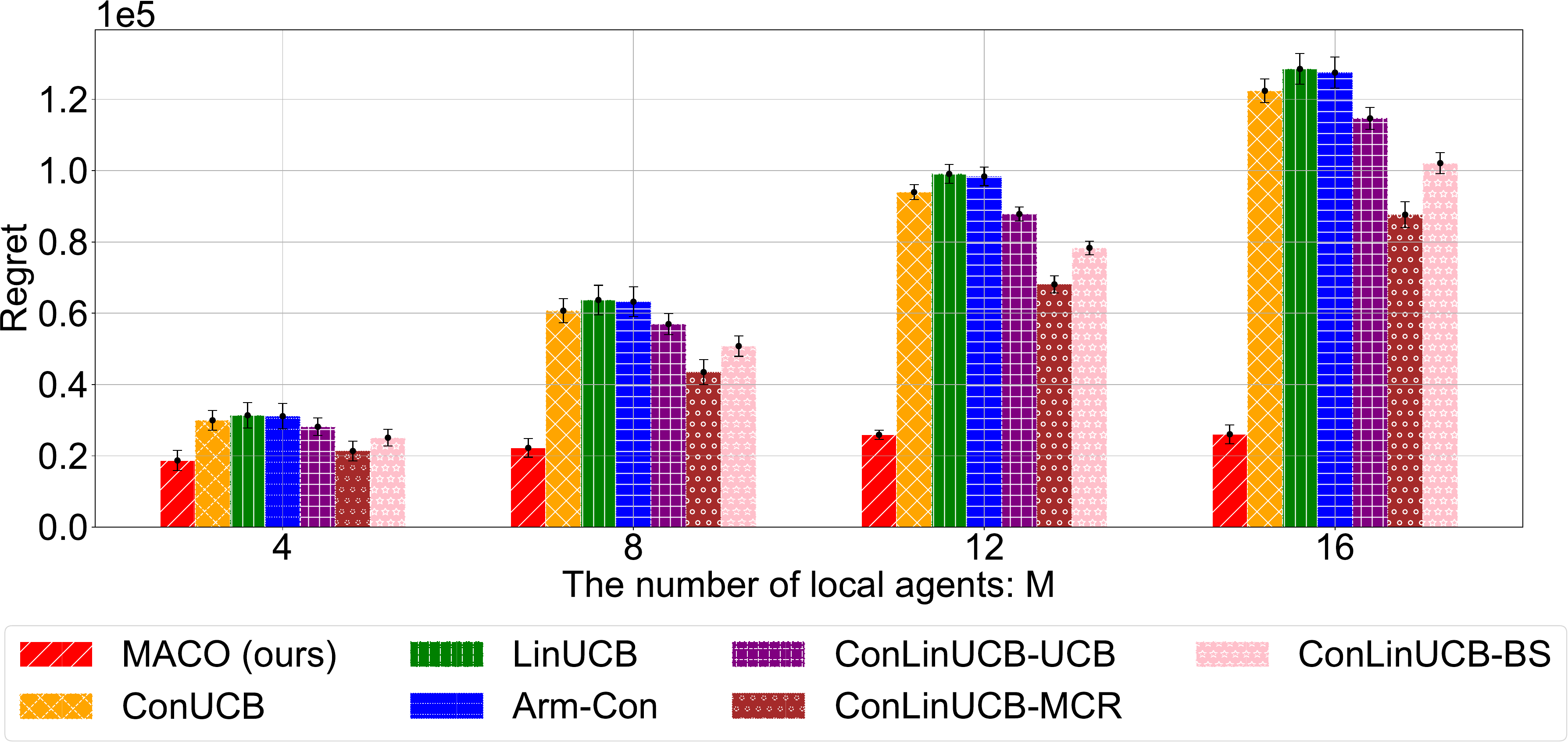}
     
    \caption{Regret under various numbers of local agents.}
    \label{fig:result_diff_clients}
   
\end{figure}

\begin{table}[h]
    \centering
    \caption{Execution time and reward on different settings.}
    \label{tab:combined}
    \resizebox{1\columnwidth}{!}{
    \begin{tabular}{|l|cc|cc|cc|}
    \hline
    \multirow{2}{*}{\makebox[0.155\textwidth][c]{\diagbox{\tiny{Setting}}{\tiny{Algorithm}}}} & \multicolumn{2}{c|}{\textbf{MACO (w/o G)}} & \multicolumn{2}{c|}{MACO (w/G)} & \multicolumn{2}{c|}{ConLinUCB-BS} \\ \cline{2-7}
    & Time (s) & Reward & Time (s) & Reward & Time (s) & Reward \\ \hline
    Setting (a) & 2.576 & 61.849 & 9.766 & 61.847 & 18.124 & 59.811 \\ 
    Setting (b) & 2.546 & 61.605 & 14.272 & 61.591 & 18.056 & 59.663 \\ 
    Setting (c) & 2.576 & 47.405 & 6.369 & 47.381 & 17.926 & 46.104 \\ 
    Setting (d) & 2.661 & 41.770 & 6.270 & 41.858 & 17.919 & 40.720 \\ \hline
    \end{tabular}}
\end{table}

\textbf{Comparison of Overhead on Execution Time.} 
We compare the execution time (s) of our algorithm, \texttt{MACO w/o G} for emphasis, against \texttt{ConLinUCB-BS} (noted as the fastest in \cite{Wang-2023-Efficient}) and \texttt{MACO w/G} (using traditional \textit{G-optimal design}) under \(T=5000\), 6 phases (\(A=40, M=4\)). The results of the two embedding models under the two response settings are labeled Settings (a), (b), (c), and (d). \Cref{tab:combined} (full version in \Cref{app:experiment}) shows \texttt{MACO w/o G} significantly reduces execution time by avoiding \textit{G-optimal design} and leveraging multi-agent data aggregation to speed up learning. Additionally, \texttt{MACO w/o G} maintains the same average reward as \texttt{MACO w/G}, confirming that our conversational approach preserves performance while replacing \textit{G-optimal design} with a more efficient, conversation-based design.

\textbf{Performance on Real-world Datasets.}
We evaluate \maco's cumulative regret against seven baseline algorithms on the StyleEval \cite{li_stylechat_2024} and Gretel Multilingual \cite{gretel-synthetic-multilingual-llm-prompts-2024} datasets with $T=1000$, using Google and OpenAI text embedding models across varying numbers of local agents (\(M=4, 8, 12, 16\)). \Cref{tab:realdata1} presents results on the StyleEval dataset, where \maco consistently achieves the lowest regret, outperforming the best baseline (\texttt{ConUCB}) by at least 79.56\% for \(M=4\) (Google embeddings) and 71.08\% (OpenAI embeddings). Similarly, the results on the Gretel Multilingual dataset (with details in \Cref{tab:realdata2}), with \maco reducing regret by at least 80.07\% (Google embeddings) and 69.23\% (OpenAI embeddings) compared to \texttt{ConUCB} for \(M=4\). Across both datasets, \maco's regret scales sublinearly with \(M\), validating its robust multi-agent framework for online  LLM response evaluation and selection even in diverse and multilingual settings.

\section{Related Work}
\label{sec:related-work}
Research on prompt learning for automatically generating suitable LLM responses has made significant progress \cite{guo2023connecting,zhang2023auto}. For example, \citet{li_stylechat_2024} designs StyleChat for stylized response generation. However, offline generating methods face challenges like ``data drift,'' emphasizing the need for online approaches to optimize LLM responses \cite{ekya,chen2023frugalgpt}. \citet{xia2024llm} introduces an online non-stationary bandit method across different LLMs. \citet{shi2024best} proposes an online budget-limited LLM response optimization using various prompts. And \citet{dai2024cost} focuses on response identification over multiple LLM coordination. Nevertheless, these studies ignore the impact of user preferences and the natural multi-agent setting in LLM response identification.

  Bandits tackle the exploitation-exploration tradeoff of online decision-making problems \cite{Abbasi-Improved-2011}. Based on this, conversational contextual linear bandits, introduced by \cite{Zhang-Conversational-WWW20}, allow the cloud server to obtain user feedback on key terms to elicit preferences, in addition to arm selection. Later studies introduce clustering to avoid labeling efforts \cite{wu-2021-clustering-of-conversational}, integrate knowledge graphs for term selection \cite{zhao-2022-knowledge-aware}, and compute the barycentric spanner as an efficient exploration basis \cite{Wang-2023-Efficient}. 
 Regarding the multi-agent bandit setting under finite arm sets, \citet{wang-2020-distributed-bandit} assumes homogeneous arm sets, \citet{huang-2021-federated}   requires the local agents to upload arm sets, increasing costs and privacy concerns, and \cite{li2024fedconpe} utilizes the computationally intensive \textit{G-optimal design}. Unlike existing works, we are the first to extend conversational bandits to multi-agent settings for online LLM response adaptation with reduced computation resources.

\section{Conclusion and Future Work}
\label{sec:conclusion}
This paper presents \(\text{\maco}\), a novel multi-agent conversational bandit framework designed to online evaluate and select optimal responses from LLMs while minimizing cumulative regret and aligning with user preferences. Our \texttt{MACO} framework consists of local agents (\texttt{MACO-A}) that adaptively manage conversations and response selection, and a cloud server (\texttt{MACO-S}) that aggregates data to learn user preferences efficiently. We have proved that \(\text{\maco}\) achieves optimal regret bounds, reduces conversations, and enhances computational efficiency. Our extensive evaluations, using open LLMs such as Llama and GPT-4o, confirm that our approach significantly improves performance over traditional methods on real-world datasets. 
Future work could explore clustering similar user preferences, integrating privacy-preserving techniques, and extending the framework beyond linear reward models.

\section*{Ethics Statement}
This research raises no ethical concerns, with no involvement of human subjects or potential for harm.

\section*{Acknowledgement}
The work of John C.S. Lui was supported in part by the RGC GRF-14215722.

\clearpage

\appendix

\section{Proof Appendix}\label{app:proof}

\subsection{Facts}
\label{sec:preliminaries}
  We first introduce some well-known facts without proofs.

  \begin{fact}[Subgaussian random variables]\label{lemma:subgaussian}
    Suppose that random variables \(X\) is \(\sigma\)-subgaussian, \(X_1\) and \(X_2\) are independent and \(\sigma_1\) and \(\sigma_2\)-subgaussian, respectively, then
    \begin{enumerate}
      \item For any \(\varepsilon>0\), \(\Pr\left[X\geq \varepsilon\right] \leq \exp\left(-\frac{\varepsilon^2}{2\sigma^2}\right)\).
      \item \(X_1 + X_2\) is \(\sqrt{\sigma_1^2 + \sigma_2^2}\)-subgaussian.
    \end{enumerate}
  \end{fact}

  \begin{fact}[\protect\citeauthor{bretagnolle-huber-1978}]\label{lemma:bretagnolle-huber}
    Let \(P\) and \(Q\) be probability measures on the same measurable space \((\Omega, \mathcal{F})\), and let \(A \in \mathcal{F}\) be an arbitrary event. Then,
    \[P(A) + Q(A^c) \geq \frac{1}{2} \exp(-D(P \parallel Q)),\]
    where \(D(P \parallel Q)=\int_{\Omega}\log\left(\odv{P}{Q}\right)\odif{P} = \E_P\left[\log\odv{P}{Q}\right]\) is the KL divergence between \(P\) and \(Q\). \(A^c = \Omega \setminus A\) is the complement of \(A\).
  \end{fact}

  \begin{fact}[KL divergence between Gaussian distributions]\label{lemma:kl-divergence-of-gaussian}
    If \(P \sim \mathcal{N}(\mu_1, \sigma^2)\) and \(Q \sim \mathcal{N}(\mu_2, \sigma^2)\), then
    \[D(P \parallel Q)=\frac{(\mu_1-\mu_2)^2}{2\sigma^2}.\]
  \end{fact}

\subsection{Proof of Regret Upper Bound in Theorem \ref{thm:regret}}\label{subsec:analysis}

\begin{proof}

We now provide an analysis of the upper bound in Theorem \ref{thm:regret}. Proofs for other theorems can be found in~\Cref{app:thm-regret}, ~\Cref{sec:proof-communication}, ~\Cref{sec:proof-conversation}. Below, we present two critical lemmas related to the design of our multi-agent conversational bandit algorithm. \Cref{lemma:lower-bound-of-smallest-eigenvalue} guarantees that for any local agent \(m\), the smallest eigenvalue of the information matrix, adjusted for conversational feedback, remains above \(h_p\). This supports the design of line \ref{line:check-eigenvalue} in Algorithm \ref{algo:client}. \Cref{lemma:bad-event} ensures that the algorithm operates within established error limits, which is essential for online, reliable LLM response identification.

\begin{lemma}[Stability of the Information Matrix]\label{lemma:lower-bound-of-smallest-eigenvalue}
  For any local agent \(m \in \mM\) during phase \(p \), we have
  $\lambda_{\text{min}}(\vec{M}_m^{p'}) \geq h_p,$
  where $\vec{M}_m^{p'} \coloneqq \vec{M}_m^{p} + \sum_{k \in \mathcal{K}_m^{p}} \frac{h_p-\lambda}{\beta^2}\tilde{\vec{x}}_k \tilde{\vec{x}}_k^\mathsf{T}$.
\end{lemma}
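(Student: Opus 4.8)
The plan is to recast the claim as the Loewner-order inequality \(\vec{M}_m^{p'} \succeq h_p \vec{I}\), which is equivalent to \(\lambda_{\text{min}}(\vec{M}_m^{p'}) \geq h_p\), and to verify it by analyzing the quadratic form \(\vec{u}^\mathsf{T}\vec{M}_m^{p'}\vec{u}\) in the eigenbasis of \(\vec{M}_m^{p}\). First I would diagonalize \(\vec{M}_m^{p} = \sum_{j=1}^d \lambda_{\vec{v}_j}\vec{v}_j\vec{v}_j^\mathsf{T}\) and split the spectrum into the \emph{saturated} directions \(\mathcal{G} = \{j : \lambda_{\vec{v}_j} \geq h_p\}\) and the \emph{deficient} directions \(\mathcal{B} = \{j : \lambda_{\vec{v}_j} < h_p\}\). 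By Line~\ref{line:check-eigenvalue} of Algorithm~\ref{algo:client} exactly the eigenvectors in \(\mathcal{B}\) are uploaded, and by Line~\ref{line:find-key-term} of Algorithm~\ref{algo:server} each such \(\vec{v}_j\) is matched with \(k_j = \arg\max_{i\in\mathcal{K}}\tilde{\vec{x}}_i^\mathsf{T}\vec{v}_j\); Condition~\ref{cond:key-term-richness} then supplies the crucial overlap bound \(\tilde{\vec{x}}_{k_j}^\mathsf{T}\vec{v}_j \geq \beta\).

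The core of the argument is a direction-by-direction accounting showing that the chosen coefficients exactly repair each deficient eigenvalue. Along a deficient direction \(\vec{v}_j\), \(j \in \mathcal{B}\), the matched key term contributes \(\frac{h_p - \lambda_{\vec{v}_j}}{\beta^2}(\tilde{\vec{x}}_{k_j}^\mathsf{T}\vec{v}_j)^2 \geq \frac{h_p - \lambda_{\vec{v}_j}}{\beta^2}\cdot\beta^2 = h_p - \lambda_{\vec{v}_j}\), so that \(\vec{v}_j^\mathsf{T}\vec{M}_m^{p'}\vec{v}_j \geq \lambda_{\vec{v}_j} + (h_p - \lambda_{\vec{v}_j}) = h_p\); along a saturated direction \(\vec{v}_i\), \(i \in \mathcal{G}\), every key-term term is nonnegative, so \(\vec{v}_i^\mathsf{T}\vec{M}_m^{p'}\vec{v}_i \geq \lambda_{\vec{v}_i} \geq h_p\). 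This is precisely the calibration for which \(h_p = \frac{3}{4(1-2^{-2p})d}\) is defined, and the factor \(\beta^{-2}\) is exactly what cancels the worst-case overlap loss (it is also the normalization that ties \(\vec{M}_m^{p'}\) to the aggregate \(\vec{G}_m^p\) and fixes the repetition counts \(n_{m,k}^p\)). At this point the quadratic form is at least \(h_p\) on every eigenvector of \(\vec{M}_m^{p}\).

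The step I expect to be the main obstacle is upgrading this per-eigenvector bound to a bound on the smallest eigenvalue over \emph{all} unit vectors, since ``\(\vec{u}^\mathsf{T}\vec{M}_m^{p'}\vec{u}\geq h_p\) on an orthonormal basis'' does not by itself give \(\vec{M}_m^{p'}\succeq h_p\vec{I}\). When \(\beta<1\) the repair matrix \(\sum_{j\in\mathcal{B}}\frac{h_p-\lambda_{\vec{v}_j}}{\beta^2}\tilde{\vec{x}}_{k_j}\tilde{\vec{x}}_{k_j}^\mathsf{T}\) is built from key terms only \emph{approximately} aligned with the eigenbasis, so it injects indefinite off-diagonal cross terms \(\sum_{j\in\mathcal{B}}\frac{h_p-\lambda_{\vec{v}_j}}{\beta^2}(\tilde{\vec{x}}_{k_j}^\mathsf{T}\vec{v}_i)(\tilde{\vec{x}}_{k_j}^\mathsf{T}\vec{v}_{i'})\); equivalently, \(\frac{1}{\beta^2}\tilde{\vec{x}}_{k_j}\tilde{\vec{x}}_{k_j}^\mathsf{T} - \vec{v}_j\vec{v}_j^\mathsf{T}\) is indefinite. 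In the exactly-aligned limit \(\beta\to1\) the key terms coincide with the eigenvectors and the bound \(\lambda_{\text{min}}(\vec{M}_m^{p'})\geq h_p\) holds cleanly, so my plan for general \(\beta\) is to write \(\tilde{\vec{x}}_{k_j} = \vec{v}_j + \vec{e}_j\) with \(\|\vec{e}_j\| = O(\sqrt{1-\beta^2})\) and use a Weyl/Gershgorin perturbation estimate to show the off-diagonal injection perturbs the spectrum by only \(O(1-\beta^2)\), which for \(\beta\) close to \(1\) (as Condition~\ref{cond:key-term-richness} posits) is a lower-order slack that is absorbed against the strict surplus \(\lambda_{\vec{v}_i}-h_p\) of the saturated directions and is harmless for the downstream \(\mathcal{\widetilde{O}}(\sqrt{dMT})\) regret. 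Making this misalignment control rigorous, rather than the easy per-direction calibration, is the delicate part I would spend the most care on.
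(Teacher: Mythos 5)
Your first two paragraphs retrace the paper's own proof: diagonalize \(\vec{M}_m^{p}\), invoke Line~\ref{line:check-eigenvalue} of Algorithm~\ref{algo:client}, Line~\ref{line:find-key-term} of Algorithm~\ref{algo:server}, and Condition~\ref{cond:key-term-richness} to get \(\tilde{\vec{x}}_{k_j}^\mathsf{T}\vec{v}_j \geq \beta\) for each deficient direction, and verify that the weight \(\frac{h_p-\lambda_{\vec{v}_j}}{\beta^2}\) restores the quadratic form to at least \(h_p\) on each eigenvector. The divergence is exactly at the step you isolate in your third paragraph. The paper disposes of it by asserting \(\tilde{\vec{x}}_k\tilde{\vec{x}}_k^\mathsf{T} = (c_j\vec{v}_j+\vec{x})(c_j\vec{v}_j+\vec{x})^\mathsf{T} = c_j^2\vec{v}_j\vec{v}_j^\mathsf{T}+\vec{x}\vec{x}^\mathsf{T}\), i.e., it silently discards the cross term \(c_j(\vec{v}_j\vec{x}^\mathsf{T}+\vec{x}\vec{v}_j^\mathsf{T})\); this ``identity'' is false whenever \(\vec{x}\neq\vec{0}\), and the discarded term is precisely the indefinite off-diagonal injection you flag. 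So your diagnosis of where the difficulty lies is correct, and it exposes a genuine hole in the paper's own argument, not merely an obstacle to yours.

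The gap in your proposal is that the deferred step cannot be completed: for \(\beta<1\) the lemma as stated is false, so no Weyl/Gershgorin absorption argument will recover it. Concretely, take \(d=2\) and \(p=1\) (so \(h_1=\tfrac12\)), let every arm equal \(\vec{v}_2\) so that \(\vec{M}_m^{1}=\vec{v}_2\vec{v}_2^\mathsf{T}\) and only \(\vec{v}_1\) is uploaded, and let the best-aligned key term be \(\tilde{\vec{x}}_k=\beta\vec{v}_1+s\vec{v}_2\) with \(s=\sqrt{1-\beta^2}\) (a key-term set whose best alignment with \(\vec{v}_1\) is exactly \(\beta\) is permitted by Condition~\ref{cond:key-term-richness}). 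Writing the repaired matrix as \(\vec{M}' \coloneqq \vec{M}_m^{1}+\frac{h_1-0}{\beta^2}\tilde{\vec{x}}_k\tilde{\vec{x}}_k^\mathsf{T}\), in the \(\{\vec{v}_1,\vec{v}_2\}\) basis one gets
\[
\vec{M}'-h_1\vec{I}=\begin{pmatrix} 0 & \frac{s}{2\beta} \\ \frac{s}{2\beta} & \frac{1}{2}+\frac{s^2}{2\beta^2} \end{pmatrix},
\]
whose determinant is \(-s^2/(4\beta^2)<0\) for every \(\beta<1\); hence \(\lambda_{\min}(\vec{M}')<h_1\). Two quantitative claims in your sketch also fail: the cross term is \emph{linear} in the misalignment, so its spectral norm is \(O(\sqrt{1-\beta^2})\), not \(O(1-\beta^2)\); and there need be no ``strict surplus'' to absorb it against --- saturated eigenvalues may equal \(h_p\) exactly, and the example above violates the bound even though direction \(\vec{v}_2\) has surplus \(\tfrac12\). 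What is actually provable along your (and the paper's) lines is either the lemma at \(\beta=1\), where \(c_j\geq 1\) and \(\lVert\tilde{\vec{x}}_{k_j}\rVert=1\) force \(\tilde{\vec{x}}_{k_j}=\vec{v}_j\) so the cross terms vanish identically (this is the regime the paper's Remark and experiments use), or a weakened bound of the form \(\lambda_{\min}(\vec{M}_m^{p'})\geq h_p-O(\sqrt{1-\beta^2})\) for \(\beta<1\), which would have to be propagated through Lemma~\ref{lemma:bad-event} and the regret analysis, or a repaired algorithm with inflated weights \(n_{m,k}^p\) when \(\beta\) is close to \(1\). In any of these forms, the statement being proved is no longer the lemma as written --- by you or by the paper.
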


\begin{proof}
  Using the eigenvectors as an orthonormal basis, for any \(j \in [d]\), any key term's $k$ feature vector can be expressed as \(\tilde{\vec{x}}_k = \sum_{i=1}^{d} c_i \vec{v}_i = \sum_{i=1, i\neq j}^d c_i \vec{v}_i + c_j \vec{v}_j\), where
  \(\vec{x} \coloneqq \sum_{i=1, i\neq j}^d c_i \vec{v}_i\) is orthogonal to \(\vec{v}_j\).
  According to Line~\ref{line:find-key-term} of Algorithm~\ref{algo:server} and Condition~\ref{cond:key-term-richness}, we have \(\tilde{\vec{x}}_k^\mathsf{T} \vec{v}_j\geq \beta\) for the selected key term $k$.
  Therefore, we have \((\sum_{i=1}^{d} c_i \vec{v}_i)^\mathsf{T} \vec{v}_j = c_j \geq \beta\), and \(\tilde{\vec{x}}_k \tilde{\vec{x}}_k^\mathsf{T} = (c_j \vec{v}_j+\vec{x})(c_j \vec{v}_j+\vec{x})^\mathsf{T}=c_j^2 \vec{v}_j \vec{v}_j^\mathsf{T} + \vec{x} \vec{x}^\mathsf{T}\).
 By spectral decomposition and  line~\ref{line:check-eigenvalue} in Algorithm~\ref{algo:client}, we have $\vec{M}_m^{p'}=$ $\sum_{i=1}^{d} \lambda_i \vec{v}_i \vec{v}_i^\mathsf{T} + \sum_{j: \lambda_j<h_{p}} \frac{h_{p}-\lambda_j}{C^2} \ab(c_j^2 \vec{v}_j \vec{v}_j^\mathsf{T} + \vec{x} \vec{x}^\mathsf{T})$. Then, $\vec{M}_m^{p'} \succeq $$\sum_{i=1}^{d} \lambda_m \vec{v}_m \vec{v}_m^\mathsf{T} + \sum_{j: \lambda_j<h_{p}} \ab(h_{p} - \lambda_j) \vec{v}_j \vec{v}_j^\mathsf{T}$ $\succeq \sum_{i=1}^{d} \frac{3}{4(1-2^{-2p})d} \vec{v}_m \vec{v}_m^\mathsf{T}.$
The proof concludes by the Loewner order property, stating if \(\vec{A} \succeq \vec{B}\), then \(\lambda_j(\vec{A}) \geq \lambda_j(\vec{B})\).
\end{proof}


\begin{lemma}[Reliability of Estimation Error Bounds]\label{lemma:bad-event}
Define the ``bad'' event \(\mathcal{E}\) where any local agent $m$ at phase \(p\) has:
\[\mathcal{E} = \{\exists m \in \mM, a \in \mathcal{A}_m^{p}, \left| \langle \widehat{\vec{\theta}}_{p} - \vec{\theta}^{*}, \vec{x}_a \rangle \right| > \frac{2^{-p}}{\sqrt{M}}\}.\]
The probability of \(\mathcal{E}\) is bounded by \(\delta\), i.e., \(\Pr[\mathcal{E}] \leq \delta\).

\end{lemma}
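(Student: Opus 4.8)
The plan is to reduce $\mathcal{E}$ to a per-arm concentration inequality and then union bound over agents, arms, and phases. Substituting the reward models $r_{a,t}=\inprod{\vec{x}_{a,t}}{\vec{\theta}^*}+\eta_{m,t}$ and $\widetilde r_{k,t}=\inprod{\tilde{\vec{x}}_{k,t}}{\vec{\theta}^*}+\widetilde\eta_{m,t}$ into $\vec{W}_m^p$ and summing over phases and agents gives $\vec{W}=\vec{G}\vec{\theta}^*+\vec{Z}$, where $\vec{Z}=\sum_s \eta_s\vec{x}_s$ collects every noise term weighted by its feature (arm or key term). Hence $\widehat{\vec{\theta}}_p-\vec{\theta}^*=\vec{G}^{-1}\vec{Z}$ and, for any arm $a$,
\[
\inprod{\widehat{\vec{\theta}}_p-\vec{\theta}^*}{\vec{x}_a}=\vec{x}_a^\mathsf{T}\vec{G}^{-1}\vec{Z}=\sum_s \bigl(\vec{x}_a^\mathsf{T}\vec{G}^{-1}\vec{x}_s\bigr)\eta_s .
\]
Since $\vec{G}=\sum_s \vec{x}_s\vec{x}_s^\mathsf{T}$, the squared coefficients telescope: $\sum_s(\vec{x}_a^\mathsf{T}\vec{G}^{-1}\vec{x}_s)^2=\vec{x}_a^\mathsf{T}\vec{G}^{-1}\vec{x}_a=\|\vec{x}_a\|_{\vec{G}^{-1}}^2$. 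Conditioning on the history up to the start of the current phase---under which each active set $\mathcal{A}_m^p$, and therefore the design matrix $\vec{G}$, is fixed while the fresh rewards remain independent $1$-subgaussian---this is a weighted sum of subgaussians, so $\Pr[\,|\inprod{\widehat{\vec{\theta}}_p-\vec{\theta}^*}{\vec{x}_a}|>\varepsilon\,]\le 2\exp(-\varepsilon^2/(2\|\vec{x}_a\|_{\vec{G}^{-1}}^2))$.

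The crux is a Loewner lower bound on $\vec{G}$. Writing $L\coloneqq\log\frac{2AM\log T}{\delta}$, the arm counts satisfy $\sum_{a\in\mathcal{A}_m^p} n_{m,a}^p\,\vec{x}_a\vec{x}_a^\mathsf{T}\succeq d\,2^{2p+1}L\,\vec{M}_m^p$, and, using the identity $\frac{3}{2(1-2^{-2p})}-2d\lambda=2d(h_p-\lambda)$, the key-term counts satisfy $\sum_{k\in\mathcal{K}_m^p} n_{m,k}^p\,\tilde{\vec{x}}_k\tilde{\vec{x}}_k^\mathsf{T}\succeq d\,2^{2p+1}L\sum_{k}\frac{h_p-\lambda}{\beta^2}\tilde{\vec{x}}_k\tilde{\vec{x}}_k^\mathsf{T}$; both contributions carry the same factor $d\,2^{2p+1}L$. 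Adding them shows $\vec{G}_m^p\succeq d\,2^{2p+1}L\,\vec{M}_m^{p'}$, with $\vec{M}_m^{p'}$ exactly the matrix of Lemma~\ref{lemma:lower-bound-of-smallest-eigenvalue}, so that lemma gives $\vec{G}_m^p\succeq d\,2^{2p+1}L\,h_p\,\vec{I}=:c_p\vec{I}$ with $c_p=\frac{3}{4(1-2^{-2p})}\,2^{2p+1}L$. Because the same bound holds for every agent and the matrices $\vec{G}_{m'}^p$ are positive semidefinite, summing over all $M$ agents yields $\vec{G}\succeq Mc_p\vec{I}$, hence $\|\vec{x}_a\|_{\vec{G}^{-1}}^2\le 1/(Mc_p)$. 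This is the step where aggregation across the $M$ heterogeneous agents---each made to cover the whole feature space by its conversations via Lemma~\ref{lemma:lower-bound-of-smallest-eigenvalue}---supplies the factor $M$ that will cancel the $1/\sqrt{M}$ in the target width.

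Finally I would set $\varepsilon=2^{-p}/\sqrt{M}$ and insert $\|\vec{x}_a\|_{\vec{G}^{-1}}^2\le 1/(Mc_p)$, making the exponent
\[
\frac{\varepsilon^2}{2\|\vec{x}_a\|_{\vec{G}^{-1}}^2}\ge \frac{2^{-2p}\,c_p}{2}=\frac{3}{4(1-2^{-2p})}\,L\ \ge\ \tfrac34 L,
\]
so each fixed triple $(m,a,p)$ fails with probability at most $2\exp(-\Theta(L))$, which the calibrated constants in $h_p$ and in the factor $\tfrac{2AM\log T}{\delta}$ make at most $\delta/(AM\log T)$. A union bound over the $M$ agents, the at most $A$ arms each, and the $O(\log T)$ phases (absorbed by the $\log T$ already inside $L$) then gives $\Pr[\mathcal{E}]\le\delta$.

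I expect two points to demand the most care. Conceptually, the decisive step is the bound $\vec{G}\succeq Mc_p\vec{I}$: one must notice that $n_{m,a}^p$ and $n_{m,k}^p$ were designed to share the factor $d\,2^{2p+1}L$ so that $\vec{G}_m^p$ collapses exactly onto the Lemma~\ref{lemma:lower-bound-of-smallest-eigenvalue} matrix $\vec{M}_m^{p'}$, and that it is the aggregation over $M$ agents that produces the $\sqrt{M}$ gain. Technically, the hard part will be the adaptivity of the design: the active sets $\mathcal{A}_m^p$ depend on earlier rewards, so the weights $\vec{x}_a^\mathsf{T}\vec{G}^{-1}\vec{x}_s$ are data-dependent and the naive ``sum of independent subgaussians'' statement is not immediate. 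Justifying the subgaussian tail with variance proxy $\|\vec{x}_a\|_{\vec{G}^{-1}}^2$ requires conditioning on the pre-phase filtration (under which the design is measurable and the relevant noise is conditionally subgaussian), or invoking a self-normalized martingale bound, while checking that the deterministic bound on the proxy holds on the conditioning event.
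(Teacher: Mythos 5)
Your route is the same as the paper's: write the error through the Gram matrix $\vec{G}$, lower-bound $\lambda_{\min}(\vec{G})$ using the calibrated counts $n_{m,a}^p$, $n_{m,k}^p$ together with Lemma~\ref{lemma:lower-bound-of-smallest-eigenvalue}, apply subgaussian least-squares concentration with variance proxy $\|\vec{x}_a\|_{\vec{G}^{-1}}^2$, and close with a union bound over agents, arms, and phases. Your reduction of both count families onto $\vec{M}_m^{p'}$ via the identity $\frac{3}{2(1-2^{-2p})}-2d\lambda = 2d(h_p-\lambda)$ is correct, and you rightly flag the adaptivity-of-design subtlety that the paper itself papers over by citing the fixed-design result of Chapter 20.1 of Lattimore--Szepesv\'ari.

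There is, however, a genuine gap in the final accounting. Writing $L \coloneqq \log\frac{2AM\log T}{\delta}$, you keep only the current-phase blocks and get $\lambda_{\min}(\vec{G}) \geq M c_p$ with $c_p = \frac{3}{4(1-2^{-2p})}2^{2p+1}L$, hence an exponent of $\frac{3}{4(1-2^{-2p})}L$. For $p=1$ this equals $L$, but for every $p\geq 2$ it is strictly smaller than $L$ (the factor decays toward $3/4$). Since the base $\frac{\delta}{2AM\log T}$ is below one, the per-triple failure probability $2\exp\left(-\tfrac{3}{4(1-2^{-2p})}L\right) = 2\left(\tfrac{\delta}{2AM\log T}\right)^{3/(4(1-2^{-2p}))}$ is strictly \emph{larger} than $\frac{\delta}{AM\log T}$ --- raising a quantity below one to a power below one increases it --- so ``the calibrated constants make it at most $\delta/(AM\log T)$'' is false for $p\ge 2$. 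The union over the $\mathcal{O}(AM\log T)$ triples then gives on the order of $(AM\log T)^{1/4}\delta^{3/4}$, which exceeds $\delta$ for small $\delta$, and the lemma is not established. The paper avoids this by aggregating $\vec{G} = \sum_{s\in[p]}\sum_{m\in\mM} \vec{G}_m^s$ over \emph{all} phases: since $h_s$ is decreasing in $s$, one has $\sum_{s=1}^{p} 2^{2s} h_s \geq \frac{3}{4(1-2^{-2p})d}\sum_{s=1}^{p}2^{2s} = \frac{2^{2p}}{d}$, i.e., the geometric series exactly compensates the per-phase deficit $\frac{3}{4(1-2^{-2p})}$, yielding $\lambda_{\min}(\vec{G}) \geq 2^{2p+1}ML$ with no constant loss and per-triple probability exactly $\frac{\delta}{AM\log T}$. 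Restoring this cross-phase accumulation --- which you already set up when you noted the remaining blocks are positive semidefinite, but then discarded --- repairs your argument with no other changes.
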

\begin{proof}
For any phase \(p\), given \(\vec{G}\)'s definition in Algorithm~\ref{algo:server}, it follows that $\vec{G}=\sum_{s=1}^{p} \sum_{m=1}^{M} \vec{G}_m^p \succeq 2d \log \ab(\frac{2AM\log T}{\delta})  $
$\sum_{m=1}^{M}\underbrace{\ab[\sum_{s=1}^{p} \frac{1}{2^{-2p}}(\sum_{a \in \mathcal{A}_m^{p}} \frac{\vec{x}_a\vec{x}_a^\mathsf{T}}{|\mathcal{A}_m^{p}|} + \sum_{k \in \mathcal{K}_m^{p}} \frac{h_p - \lambda}{\beta^2}\tilde{\vec{x}}_k \tilde{\vec{x}}_k^\mathsf{T})]}_{\triangleq \vec{Q}_m^{p}}.$
  By the Weyl's inequality, we have the lower bound of the smallest eigenvalue of \(\vec{Q}_m^{p}\): $\lambda_{\text{min}}\ab(\vec{Q}_m^{p}) \geq$$ \sum_{s=1}^{p} \frac{1}{2^{-2p}} \lambda_{\text{min}}\ab(\vec{M}_m^{p} + \sum_{k \in \mathcal{K}_m^{p}} \frac{h_p - \lambda}{\beta^2}\tilde{\vec{x}}_k \tilde{\vec{x}}_k^\mathsf{T}) $.  By Lemma~\ref{lemma:lower-bound-of-smallest-eigenvalue}, $\lambda_{\text{min}}\ab(\vec{Q}_m^{p}) \geq$ $ \sum_{s=1}^{p} \frac{1}{2^{-2p}} \frac{3}{4(1-2^{-2p})d}
      \geq \frac{3}{4(1-2^{-2p})d}\sum_{s=1}^{p} \frac{1}{2^{-2p}} = \frac{1}{d\cdot 2^{-2p}}$.
  Based on this, we have $\lambda_{\text{min}} \left( \vec{G} \right) \geq 2^{2p+1}M \log \frac{2AM\log T}{\delta}.$
  According to the concentration of linear regression in Chapter 20.1 of \cite{lattimore-2020-bandit-algorithms} (with the gram matrix refined as   $\vec{G}$ for incorporating information from key terms),  for any \(\delta > 0\), \(s \in [p]\), \(\vec{x} \in \RR^d\), with probability at least \(1-2\delta\), we have
  $\left| \inprod{\widehat{\vec{\theta}}_{s} - \vec{\theta}^{*}}{\vec{x}} \right| \leq \sqrt{2 \|\vec{x}\|_{\vec{G}^{-1}}^2 \log \frac{1}{\delta}}.$ Then, by the Courant-Fischer theorem, 
  with probability at least \(1-\frac{\delta}{AM\log T}\), for any \(m \in \mM\) and all arm \(a \in \mathcal{A}_m^{p}\), we have $\left| \inprod{\widehat{\vec{\theta}}_{p} - \vec{\theta}^{*}}{\vec{x}_a} \right|
    \leq \sqrt{2\|\vec{x}_a\|_{\vec{G}^{-1}}^{2} \log \frac{2AM\log T}{\delta}} $$\leq \sqrt{ \frac{2}{\lambda_{\text{min}}(\vec{G})} \log \frac{2AM\log T}{\delta}} \leq \frac{2^{-p}}{\sqrt{M}}.$
  Finally, by the union bound, $ \Pr\left[\mathcal{E}\right] \leq MPK \frac{\delta}{AM\log T} \leq \delta$ is obtained  with \(P \leq \log T\) (deduced from Section \ref{subsec:analysis}: $ T \geq 2d 2^{2P} \log \frac{AM\log T}{\delta} \geq 2^P$).
\end{proof}

 Now, consider the  ``good'' event \(\mathcal{E}^c\) for agent \(m\) at phase \(p\). Lemma \ref{lemma:bad-event} confirms that the discrepancy for any arm \(a\) in \(\mathcal{A}_m^{p}\):
$
\langle \vec{x}_a - \vec{x}_{a_m^{*}}, \widehat{\vec{\theta}}_{p} \rangle \leq \frac{2^{-p+1}}{\sqrt{M}}.
$
This, combined with line \ref{line:elimination} in Algorithm \ref{algo:client}, supports the following lemma on the arm preservation and performance bound under good event \(\mathcal{E}^c\).

\begin{lemma}[Properties Under Good Event]\label{lemma:optimal-arm-and-performance}
  Under event \(\mathcal{E}^c\),  two key properties are ensured:
  \begin{enumerate}[leftmargin=*]
    \item The locally optimal arm \(\vec{a}_m^{*}\) remains within the active arm set \(\mathcal{A}_m^{p}\), ensuring it is never eliminated.
    \item The performance gap for any arm \(a \in \mathcal{A}_m^{p}\), \(\Delta_{m,a} \triangleq \inprod{\vec{\theta}^{*}}{\vec{x}_{a_m^{*}} - \vec{x}_a}\), is bounded by \(\frac{2^{-p+3}}{\sqrt{M}}\).
  \end{enumerate}
\end{lemma}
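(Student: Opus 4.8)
The plan is to prove both properties as consequences of the estimation-error guarantee in Lemma~\ref{lemma:bad-event}, working exclusively on the good event $\mathcal{E}^c$. Recall that under $\mathcal{E}^c$, for every local agent $m$, every phase $p$, and every arm $a \in \mathcal{A}_m^{p}$, we have $\left|\inprod{\widehat{\vec{\theta}}_{p} - \vec{\theta}^{*}}{\vec{x}_a}\right| \leq \frac{2^{-p}}{\sqrt{M}}$. The strategy is an induction on the phase index $p$: I would assume that at the start of phase $p$ the locally optimal arm $a_m^{*}$ is still active (i.e.\ $a_m^{*} \in \mathcal{A}_m^{p}$), prove it survives the elimination step (Line~\ref{line:elimination}), and simultaneously derive the per-phase gap bound. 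The base case $p=1$ is immediate since $\mathcal{A}_m^{1} = \mathcal{A}_m$ contains all arms.

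For the first property (survival of $a_m^{*}$), the key step is to show the elimination criterion in Line~\ref{line:elimination} never discards $a_m^{*}$. I would bound, for any competitor $a' \in \mathcal{A}_m^{p}$,
\begin{equation*}
\inprod{\widehat{\vec{\theta}}_{p}}{\vec{x}_{a'} - \vec{x}_{a_m^{*}}}
= \inprod{\vec{\theta}^{*}}{\vec{x}_{a'} - \vec{x}_{a_m^{*}}}
+ \inprod{\widehat{\vec{\theta}}_{p} - \vec{\theta}^{*}}{\vec{x}_{a'} - \vec{x}_{a_m^{*}}}.
\end{equation*}
The first inner product is $\leq 0$ by optimality of $a_m^{*}$ with respect to $\vec{\theta}^{*}$, and the second is bounded in absolute value by $\frac{2^{-p}}{\sqrt{M}} + \frac{2^{-p}}{\sqrt{M}} = \frac{2^{-p+1}}{\sqrt{M}}$ via the triangle inequality and Lemma~\ref{lemma:bad-event} applied to both $\vec{x}_{a'}$ and $\vec{x}_{a_m^{*}}$. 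Hence $\max_{a' \in \mathcal{A}_m^{p}} \inprod{\widehat{\vec{\theta}}_{p}}{\vec{x}_{a'} - \vec{x}_{a_m^{*}}} \leq \frac{2^{-p+1}}{\sqrt{M}}$, which is exactly the retention threshold, so $a_m^{*}$ is kept and thus $a_m^{*} \in \mathcal{A}_m^{p+1}$, completing the induction.

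For the second property (the gap bound), I would fix any surviving arm $a \in \mathcal{A}_m^{p}$ and convert the empirical elimination guarantee into a true-parameter gap bound. Since $a$ was retained in the previous phase, and using that $a_m^{*}$ is active (first property), I would write
\begin{equation*}
\Delta_{m,a} = \inprod{\vec{\theta}^{*}}{\vec{x}_{a_m^{*}} - \vec{x}_a}
= \inprod{\widehat{\vec{\theta}}_{p}}{\vec{x}_{a_m^{*}} - \vec{x}_a}
+ \inprod{\vec{\theta}^{*} - \widehat{\vec{\theta}}_{p}}{\vec{x}_{a_m^{*}} - \vec{x}_a}.
\end{equation*}
The first term is bounded by the elimination rule (applied at the phase where $a$ was retained, giving $\frac{2^{-p+1}}{\sqrt{M}}$), and the second is again bounded by $\frac{2^{-p+1}}{\sqrt{M}}$ through Lemma~\ref{lemma:bad-event}. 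Summing yields $\Delta_{m,a} \leq \frac{2^{-p+2}}{\sqrt{M}}$; I would need to track the exact phase indexing carefully to land on the stated $\frac{2^{-p+3}}{\sqrt{M}}$, accounting for the one-phase lag between when an arm is pulled and when the elimination based on those pulls takes effect. The main obstacle I anticipate is precisely this bookkeeping on phase indices and constant factors: the elimination in Line~\ref{line:elimination} uses $\widehat{\vec{\theta}}_{p}$ with threshold $\frac{2^{-p+1}}{\sqrt{M}}$, and because arms active in phase $p$ survived the phase-$(p-1)$ elimination, the cleanest route is to apply the retention inequality at the correct phase and combine it with the concentration bound at that same phase, so that the factors of $2$ accumulate to the claimed constant rather than a sharper or looser one.
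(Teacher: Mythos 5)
Your route is the same as the paper's (the paper itself only sketches this lemma: Lemma~\ref{lemma:bad-event}'s concentration bound combined with the elimination rule in Line~\ref{line:elimination} of Algorithm~\ref{algo:client}), and your treatment of property 1 is correct and complete: decomposing \(\inprod{\widehat{\vec{\theta}}_{p}}{\vec{x}_{a'} - \vec{x}_{a_m^{*}}}\) into the true-parameter term (nonpositive by optimality of \(a_m^{*}\)) plus two applications of the \(\frac{2^{-p}}{\sqrt{M}}\) bound shows the optimal arm always clears the retention threshold.

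For property 2, however, your displayed computation mixes phase indices and, as written, does not produce the stated constant. Membership \(a \in \mathcal{A}_m^{p}\) is decided by the elimination at the end of phase \(p-1\), which uses \(\widehat{\vec{\theta}}_{p-1}\) and threshold \(\frac{2^{-(p-1)+1}}{\sqrt{M}} = \frac{2^{-p+2}}{\sqrt{M}}\); it is inconsistent to pair \(\widehat{\vec{\theta}}_{p}\) with ``the elimination rule applied at the phase where \(a\) was retained.'' The clean version evaluates \emph{both} terms at phase \(p-1\): since \(a_m^{*} \in \mathcal{A}_m^{p-1}\) (property 1, inductively) and \(a\) survived that elimination, Line~\ref{line:elimination} gives \(\inprod{\widehat{\vec{\theta}}_{p-1}}{\vec{x}_{a_m^{*}} - \vec{x}_a} \leq \frac{2^{-p+2}}{\sqrt{M}}\), while Lemma~\ref{lemma:bad-event} at phase \(p-1\) gives \(\bigl|\inprod{\vec{\theta}^{*} - \widehat{\vec{\theta}}_{p-1}}{\vec{x}_{a_m^{*}} - \vec{x}_a}\bigr| \leq 2\cdot\frac{2^{-(p-1)}}{\sqrt{M}} = \frac{2^{-p+2}}{\sqrt{M}}\); summing yields exactly \(\frac{2^{-p+3}}{\sqrt{M}}\). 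So the ``one-phase lag'' you flagged is not deferrable bookkeeping—it is the entire source of the constant: each of the two terms contributes \(\frac{2^{-p+2}}{\sqrt{M}}\), whereas your all-at-phase-\(p\) version proves a bound of \(\frac{2^{-p+2}}{\sqrt{M}}\) for arms in \(\mathcal{A}_m^{p+1}\), which is a statement about the \emph{next} active set, not \(\mathcal{A}_m^{p}\). One further edge your induction should acknowledge: at \(p=1\) there is no prior elimination to invoke, so property 2 there holds only trivially (or the phase-1 regret is bounded separately), a point the paper also glosses over.
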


Finally, the regret $R_{M}(T)$ 
$ =\sum_{m=1}^{M}\sum_{t=1}^{T} \inprod{\vec{\theta}^{*}}{\vec{x}_{a_m^{*}}-\vec{x}_{a_m,t}}\) can be bounded by $\sum_{m=1}^{M}\sum_{p=1}^{P} \sum_{a \in \mathcal{A}_{m}^{p}} n_{m,a}^p \frac{2^{-p+3}}{\sqrt{M}}$, where $P$ denotes the total number of phases, with probability  $1 -\delta$. Given that $\sum_{a \in \mathcal{A}_{m}^{p}} n_{m,a}^p \leq$$2^{-2p+1}d\log \frac{2AM\log T}{\delta}  + |\mathcal{A}_{m}^{p}|$, we derive that $R_{M}(T) \leq \mathcal{O}\left( d\sqrt{M}\log \frac{AM\log T}{\delta} 2^P \right).$ Furthermore, $ T \geq \sum_{p=1}^{P} \sum_{a \in \mathcal{A}_{m}^{p}} n_{m,a}^p \geq \sum_{p=1}^{P} 2^{-2p+1}d \log \frac{2AM\log T}{\delta}$, resulting in $ T \geq 2d 2^{2P} \log \frac{AM\log T}{\delta}$. As a result, $R_{M}(T) \leq \mathcal{O}\left( \sqrt{dMT\log \frac{KM \log T}{\delta}}\right)$.

\end{proof}

\subsection{Proof of Regret Lower Bound in Theorem \ref{thm:regret}}\label{app:thm-regret}
 \begin{proof}
    Define \(R_{M,\vec{\theta}}^{\pi}(T)\) as the expected cumulative regret of policy \(\pi\) with user preference \(\vec{\theta}\) over \(M\) local agents and time horizon \(T\).  
   Assume that for all local agents $m$, the arms vectors can span $\RR^d$, and $\{\vec{x}_a\}_{a \in \mathcal{A}_m} =$ $ \{\vec{x}_k\}_{k \in \mathcal{K}}=$ $ \set{\vec{e}_1, \vec{e}_2, \dots, \vec{e}_d} \cup \set{(A-d)\text{ arbitrary unit vectors}}$, where \(\vec{e}_i\) is the \(i\)-th standard basis vector in \(\RR^d\).
   Choose \(\vec{\theta}=(\Delta,0,\dots,0)^\mathsf{T}\) (with \(\Delta \in [0,\frac{1}{2}] \) to be determined later). Let random variables \(N_i(t)\), \(\widetilde{N}_j(t)\) be the number of times the \(i\)-th arm and the \(j\)-th key term are selected, by the end of round \(t\).
  Define another user preference \(\vec{\theta}' = (\Delta, 0, \dots, 2\Delta, \dots, 0)^\mathsf{T}\), where $\theta_{\ell} = 2\Delta$ and \(\ell=\argmin_{j>1} \max\ab\{\E_{\vec{\theta}}[N_j(MT)],\E_{\vec{\theta}}[\widetilde{N}_j(MT)]\}\).
    Denote \(N_{m,a}(t)\) as the number of times the \(a\)-th arm is chosen by local agent \(m \in \mM\) after the end of round \(t\).  Given that the optimal arm for \(\vec{\theta}\) is arm 1, pulling other arms increases the expected regret by \(\Delta\). Thus, by Lemma 4.5 in \cite{lattimore-2020-bandit-algorithms}, $ R_{M,\vec{\theta}}^{\pi}(T)
      =$ $\sum_{m=1}^{M} \Delta \sum_{a=2}^{A} \E\nolimits_{\vec{\theta}}[N_{m,a}(T)]]$. Using the inequality \(\E_{\vec{\theta}}[N_j(MT)] \leq \frac{MT}{K-1}\) and \(\E_{\vec{\theta}}[\widetilde{N}_j(MT)] \leq \frac{MT}{K-1}\) and Markov inequality, we get: $R_{M,\vec{\theta}}^{\pi}(T)$ $\geq \Delta \Pr\nolimits_{\vec{\theta}}\ab[MT-\sum_{m=1}^{M} N_{i,1}(T) \geq \frac{MT}{2}] \frac{MT}{2}.$

    For \(\vec{\theta}'\), similarly, we have $R_{M,\vec{\theta}'}^{\pi}(T) \geq$   $ \Delta \Pr\nolimits_{\vec{\theta}'}\ab[\sum_{m=1}^{M} N_{i,1}(T) > \frac{MT}{2}] \frac{MT}{2}.$
    Therefore, applying the Bretagnolle-Huber theorem (Theorem 14.2 in \cite{lattimore-2020-bandit-algorithms}), $R_{M,\vec{\theta}}^{\pi}(T) + R_{M,\vec{\theta}'}^{\pi}(T)$ $\geq \frac{\Delta MT}{4} \exp\ab(-D(\PP_{\theta} \parallel \PP_{\theta'})).$ According to the properties of Kullback–Leibler (KL) divergence, with \(P \sim \mathcal{N}(\mu_1, \sigma^2)\) and \(Q \sim \mathcal{N}(\mu_2, \sigma^2)\), we have $D(\PP_{\vec{\theta}} \parallel \PP_{\vec{\theta}'}) = $ $\E\nolimits_{\vec{\theta}}[N_\ell(MT) + \widetilde{N}_\ell(MT)] D(\mathcal{N}(0,1) \parallel \mathcal{N}(2\Delta,1))$ $=\frac{(\mu_1-\mu_2)^2}{2\sigma^2}$.
    Let \(\Delta=\sqrt{\frac{d-1}{MT}}\), $ \max\ab\{R_{M,\vec{\theta}}^{\pi}(T), R_{M,\vec{\theta}'}^{\pi}(T)\}$ $\geq \frac{R_{M,\vec{\theta}}^{\pi}(T)+R_{M,\vec{\theta}'}^{\pi}(T)}{2} $ $  \geq \frac{e^{-4}}{8} \sqrt{(d-1)MT} = \Omega\ab(\sqrt{dMT}).$ 
  \end{proof}

\subsection{Proof of Theorem~\ref{thm:communication}}
\label{sec:proof-communication}
\begin{proof}
  At each phase \(p\), each local agent \(m\) downloads the following: (a) The key term vector set, containing at most \(d\) feature vectors of dimension \(d\); (b) The repetition counts for each key term \(n_{m,k}^p,\forall k \in \mathcal{K}_m^{p}\), totaling at most $d$ integers; And (3) the estimated preference vector \(\widehat{\vec{\theta}}_{p}\), a \(d\)-dimensional vector. On the other hand,
  the local agent uploads the following: (a) At most \(d\) eigenvalues and their corresponding eigenvectors; (2) The matrix \(\vec{G}_m^{p}\) and \(\vec{W}_m^{p}\), each size of \(d^2\).
Considering that the number of phases is at most \(\log T\), the upload and download costs are both \(\mathcal{O}(d^2M \log T)\).
\end{proof}

\subsection{Proof of Theorem~\ref{thm:conversation} }
\label{sec:proof-conversation}

\begin{proof}
  1) follows directly from line~\ref{line:check-eigenvalue} of Algorithm~\ref{algo:client}.
  For 2), in phase \(p\), the number of arms $n_{m}^p$ pulled by each local agent $m$  is
  $ \sum_{a \in \mathcal{A}_{m}^{p}} n_{m,a}^p=\sum_{a \in \mathcal{A}_{m}^{p}}\left\lceil \frac{2^{2p+1}d}{\mA_m^p}\log \frac{2AM\log T}{\delta} \right\rceil \geq 2^{2p+1}d \log \frac{2AM\log T}{\delta}.$
 And the number of key terms pulled $\widetilde{n}_{m}^p$  by local agent \(m\) is given by:
 $\sum_{k \in \mathcal{K}_m^{p}} n_{m,k}^p$ $=\sum_{j:\lambda_j<h_{p}}\frac{2d\ab(h_{p}-\lambda_j)}{\beta^22^{-2p}} \log\ab(\frac{2AM\log T}{\delta})$ $\leq \sum_{j=1}^d\frac{d\ab(h_{p}-\gamma)}{\beta^22^{-2p-1}} \log\ab(\frac{2AM\log T}{\delta})$.
Thus,  the ratio between the number of key terms and arms for any $m \in \mM$ is upper bounded by
$
    \frac{\widetilde{n}_{m}^p}{n_{m}^p} \leq \frac{h_p-d\gamma}{\beta^2}=\frac{\frac{3}{4(1-2^{-2p})}-d\gamma}{\beta^2} \leq \frac{1-d\gamma}{\beta^2}.
$
\end{proof}

\section{More Experiments}\label{app:experiment}

\begin{table*}[t]
    \centering
    \caption{Cumulative regret for algorithms under two embeddings on the Gretel Multilingual dataset.}\label{tab:realdata2}
        \resizebox{\linewidth}{!}{
    \begin{tabular}{lcccccccc}
        \toprule
        & \multicolumn{4}{c}{Text Embedding Model of Google} & \multicolumn{4}{c}{Text Embedding Model of OpenAI} \\
        \cmidrule(lr){2-5} \cmidrule(lr){6-9}
        Algorithm & $M=4$ & $M=8$ & $M=12$ & $M=16$ & $M=4$ & $M=8$ & $M=12$ & $M=16$ \\
        \midrule
        \small{\texttt{TRIPLE-SH}} 
            & 5285.80$\pm$11.27   & 10695.93$\pm$21.27  & 16045.36$\pm$48.37  & 21341.38$\pm$24.42  
            & 8140.94$\pm$42.59   & 16648.50$\pm$53.13   & 25177.63$\pm$97.66 & 33325.86$\pm$67.87 \\
        \texttt{LinUCB}   
            & 517.23$\pm$4.30     & 1023.67$\pm$16.61    & 1536.14$\pm$41.81    & 2054.01$\pm$52.95    
            & 376.58$\pm$13.79    & 734.99$\pm$31.95     & 1097.72$\pm$36.67    & 1476.26$\pm$31.46  \\
        \texttt{Arm-Con}  
            & 440.27$\pm$4.89     & 887.93$\pm$19.88     & 1312.99$\pm$48.81    & 1762.84$\pm$52.54    
            & 293.60$\pm$14.66    & 582.32$\pm$22.61     & 859.71$\pm$48.61     & 1182.15$\pm$29.75  \\
        \texttt{ConUCB}   
            & 250.83$\pm$3.64     & 496.09$\pm$9.33      & 752.09$\pm$21.03     & 991.33$\pm$26.33     
            & 182.31$\pm$9.19     & 361.50$\pm$17.03     & 549.97$\pm$26.20     & 732.82$\pm$16.66   \\
        \tiny{\texttt{ConLinUCB-BS}} 
            & 1047.63$\pm$13.69   & 2059.29$\pm$34.84    & 3096.94$\pm$67.77    & 4135.90$\pm$90.78    
            & 768.99$\pm$22.76    & 1474.07$\pm$63.89    & 2196.14$\pm$96.14    & 2949.82$\pm$59.38  \\
        \tiny{\texttt{ConLinUCB-MCR}}
            & 387.31$\pm$4.69     & 777.48$\pm$15.07     & 1163.03$\pm$26.54    & 1558.77$\pm$49.91    
            & 328.49$\pm$20.06    & 646.82$\pm$26.75     & 967.80$\pm$33.11     & 1291.16$\pm$30.39  \\
        \tiny{\texttt{ConLinUCB-UCB}}
            & 1194.96$\pm$8.88    & 2386.41$\pm$31.31    & 3582.20$\pm$63.30    & 4782.29$\pm$97.81    
            & 917.31$\pm$31.89    & 1808.25$\pm$58.41    & 2699.60$\pm$71.09    & 3612.74$\pm$55.74  \\
      \texttt{MACO} (Ours)    
            & \textbf{36.92$\pm$0.70} & \textbf{73.45$\pm$1.39} & \textbf{110.07$\pm$3.47} & \textbf{147.15$\pm$5.42} 
            & \textbf{56.13$\pm$2.37} & \textbf{114.70$\pm$4.54} & \textbf{173.15$\pm$4.75} & \textbf{229.30$\pm$4.63} \\
        \bottomrule
    \end{tabular}
    }

\end{table*}

\textbf{Response Settings.} We explore the implementation of two response settings using the aforementioned embedding models based on a real-world dataset and an open-source LLM:

 \ding{182}  Following the style classification by \cite{köpf2023openassistantconversationsdemocratizing}, we gather a comprehensive set of 13 keywords representing diverse styles such as ``humorous" and ``helpful", each representing a key term. These keyword styles generate 510 unique combinations, each forming an  ``arm'', where each arm represents a potential style of LLM response. Users have varying priorities for different keyword combinations, and their preference vector $\vec{\theta}$ has the highest cosine similarity with the feature vector $\vec{x}$ of their most favored keyword style (which is \textit{unknown} to the algorithms in advance).
To generate these feature vectors $\vec{x}$ for LLM responses and user preference vectors $\vec{\theta}$ on keywords, we utilize two previously mentioned embedding models. We select the top $d = 256$ dimensions as the feature representation and normalize them into a more concise and efficient dimensional space. The reward is obtained from the cosine similarity between a specific user's preference vector and the feature vector of the selected arm, and the optimal LLM response is defined as the one with the largest reward according to \cite{reimers-gurevych-2019-sentence}.

 \ding{183} Prompt engineering is utilized to construct the initiatory set of responses offline. Following \cite{sahoo2024systematicsurveypromptengineering}, we select a set of keyword styles (i.e., key terms) rich in personal identifiers to establish a diverse style collection, including terms like \textit{helpful and creative use of emojis}. Two keyword styles are jointly selected for each query, which forms a style-specific question to the LLM, ensuring focused and relevant responses. We utilize Llama-3-8B-Instruct \cite{Ollama} to generate corresponding responses.
Each prompt triggers a specific response from the LLM, with each user preference dictating a response styled according to their selected input. For example,
User: "Tell me a joke."  
The response Arm: A variety of jokes under different styles.  
Key terms: Different styles.  
We construct a total arm set of $|\mathcal{A}| = 455$ responses by formulating responses to five different questions, each with two keyword styles. This extensive collection allows for a comprehensive mapping of responses to specific user preferences, effectively forming a set of $455$ user-preference pairs.
Regarding the reward definition, the feature vector extraction, and subsequent steps, we apply the same procedures described above.

\textbf{Cumulative Regret on Gretel Multilingual Dataset.}
We assess the performance of \textsc{cadi} by measuring its cumulative regret against seven baseline algorithms using the Gretel Multilingual dataset \cite{gretel-synthetic-multilingual-llm-prompts-2024} over \(T=1000\) time steps. The evaluation employs Google and OpenAI text embedding models across varying numbers of local agents (\(M=4, 8, 12, 16\)). Consistent with findings from the StyleEval dataset (\Cref{tab:realdata1}), where \textsc{cadi} achieved the lowest regret, surpassing the best baseline (\texttt{ConUCB}) by at least 79.56\% for \(M=4\) with Google embeddings and 71.08\% with OpenAI embeddings, similar trends emerge on the Gretel Multilingual dataset. As shown in  \Cref{tab:realdata2}, \textsc{cadi} reduces regret by at least 80.07\% (Google embeddings) and 69.23\% (OpenAI embeddings) compared to \texttt{ConUCB} for \(M=4\). Across both datasets, \textsc{cadi} demonstrates sublinear regret scaling with respect to \(M\), confirming the robustness of its multi-agent framework for online LLM response evaluation and selection in diverse, multilingual contexts.

\textbf{Learning Efficiency across Multi Agents}.  \(\text{\maco}\) capitalizes on the aggregated data from all local agents, managing to scale its regret according to \(\mathcal{\widetilde{O}}(\sqrt{dMT})\). This scaling significantly dampens the increase in regret, demonstrating the effectiveness of our algorithm's multi-agent approach for online LLM response identification. A clearer depiction of this regret trend is shown in Fig. \ref{fig:result_diff_clients}, where \texttt{TRIPLE-SH} is excluded due to its inferior performance under Scenario Setting 1 with the Google's model and \(T=100000\).

\begin{table}[!ht]
    \centering
    \caption{Execution time (s) ($\pm$ standard deviation) on four settings.}
    \label{tab:time}
\resizebox{0.99\columnwidth}{!}{
    \begin{tabular}{|l|lll|}
    \hline
         \makebox[0.12 \textwidth][c]{\diagbox{\tiny{Setting}}{\tiny{Algorithm}}} & \underline{MACO (w/o G)} & MACO (w/G) & ConLinUCB-BS \\ \hline
       Setting (a)  &$ 2.576\pm0.047$ & $9.766\pm2.709$ & $18.124\pm0.111$ \\ 
        Setting (b) & $2.546\pm0.039$ & $14.272\pm7.107$ & $18.056\pm0.065$ \\ 
        Setting (c) & $2.576\pm0.085$ & $6.369\pm2.832$ &$ 17.926\pm0.095$ \\ 
        Setting (d) &$ 2.661\pm0.056$ & $6.270\pm2.013$ &$ 17.919\pm0.072$ \\ \hline
    \end{tabular}}
\end{table}

\textbf{Execution Time vs. Reward with Deviation.} 
We now present a full assessment of the execution time with statistical information of our algorithm, termed \texttt{MACO w/o G} for emphasis, against \(\texttt{ConLinUCB-BS}\) under conditions of \(T = 5000\) across 6 phases ($A=40,M=4$), and compare it with \(\texttt{MACO w/G}\), which continues to employ the traditional \(\textit{G-optimal design}\). 
The results, detailed in \Cref{tab:time}, show that our algorithm significantly reduces execution time by avoiding the \(\textit{G-optimal design}\) and leveraging data aggregation from multiple local agents to accelerate the learning process. \Cref{tab:time}  further illustrates that \texttt{MACO w/o G} exhibits the lowest deviation since the \textit{information matrix} \(\vec{M}_m^p\) is no longer dependent on a continuously adjusted distribution policy (see Eq. (\ref{eq:G-optimal})).
Additionally, the results in \Cref{tab:reward} show that the average reward for \(\text{\texttt{MACO w/o G}}\) matches that of \(\text{\texttt{MACO w/G}}\), demonstrating that our conversational approach maintains performance while replacing the traditional \(\textit{G-optimal design}\) with a more practical, conversation-based design. 

\begin{table}[!t]
    \centering
    \caption{Average reward ($\pm$ standard deviation) on four settings.}
    \label{tab:reward}
\resizebox{0.99\columnwidth}{!}{
    \begin{tabular}{|l|lll|}
    \hline
         \makebox[0.12 \textwidth][c]{\diagbox{\tiny{Setting}}{\tiny{Algorithm}}} & \underline{MACO (w/o G)} & MACO (w/G) & ConLinUCB-BS \\ \hline
       Setting (a)  &$ 61.849\pm0.558$ & $61.847\pm0.565$ & $59.811\pm0.610$ \\ 
        Setting (b) & $61.605\pm0.642$ & $61.591\pm0.649$ & $59.663\pm0.671$ \\ 
        Setting (c) & $47.405\pm0.977$ & $47.381\pm1.002$ &$ 46.104\pm0.962$ \\ 
        Setting (d) &$41.770\pm0.349$ & $41.858\pm0.412$ &$ 40.720\pm0.349$ \\ \hline
    \end{tabular}}
\end{table}

\textbf{Ablation Study.} \Cref{tab:parameter} reveals that the introduction of the coverage parameter \(\beta\)  in our design has a minimal impact on the outcomes, contrasting with the significant influence exerted by the statistical confidence parameter \(\delta\). 
Importantly, 
 \(\delta\) is an inherent component of the statistical theory and is not an additional hyperparameter introduced in our model,
which is established by the convention \cite{lattimore-2020-bandit-algorithms}. This observation underscores that our framework does not introduce new dependencies on parameters under the bandit algorithm framework.
 
\begin{table}[!ht]
    \centering
    \caption{Cumulative regret under $T=100000,A=40,M=4$.}
    \label{tab:parameter}
    
\resizebox{0.99\columnwidth}{!}{
    \begin{tabular}{|l|llll|}
    \hline
        \makebox[0.12\textwidth][c]{\diagbox{\tiny{Parameter}}{\tiny{Setting}}} & Setting (a)& Setting (b) & Setting (c) & Setting (d) \\\hline
       $\beta=1.0,\mathbf{\delta=0.1}$  &$ 20213.773$ & $16277.413$ & $15033.483$ & $8261.335$\\ 
        $\beta=0.9,\mathbf{\delta=0.05}$ & $21439.795$ & $17205.540$ & $16039.654$ &$ 8772.119$\\ 
           $\beta=0.8,\mathbf{\delta=0.05}$ & $21430.625$ & $17215.402$ & $16033.950$ & $8770.108$\\
        $\beta=0.9,\mathbf{\delta=0.15}$ & $19495.106$ & $15734.833$ & $15092.586$ & $7962.415$\\
              $\beta=0.8,\mathbf{\delta=0.15}$ & $19492.169$ & $15738.395$ & $15094.809$ &$ 7961.321$\\ 
     
        \hline
    \end{tabular}}
     
\end{table}

\end{document}